\newtheorem {theorem}{Theorem}[section]
\newtheorem {lemma}[theorem]{Lemma}
\newtheorem{coro}[theorem]{Corollary}
\newtheorem {claim}[theorem]{Claim}
\newcommand{\pr}{\mathcal{P}}
\newcommand{\st}{\mathcal{ST}}
\newcommand{\set}[1]{\{#1\}}
\newcommand{\cp}{Closest Pair}
\newcommand{\gab}{$G_{A,B}$}
\newcommand{\diamapprox}{-Approximate Diameter}
\newcommand{\Radius}{$R$}
\newcommand{\ep}{$\epsilon$}
\title{Communication Costs in a Geometric Communication Network}
\author{Sima Hajiaghaei Shanjani \thanks{Department of Computer science, University of Victoria, Victoria, BC, Canada; sima@uvic.ca, val@uvic.ca; This research was supported by NSERC.} \and Valerie King \footnotemark[1]}
\date{March 2021}
\begin{document}

\maketitle
\begin{abstract}
 A communication network is a graph in which each node has only local information about the graph and nodes communicate by passing messages along its edges. Here, we consider the {\it geometric communication network} where the nodes also occupy points in space and the distance between points is the Euclidean distance.   Our goal is to understand the communication cost needed to solve several fundamental geometry problems, including Convex Hull, Diameter, Closest Pair, and approximations of these problems, in the asynchronous CONGEST KT1 model.   This extends the 2011 result of Rajsbaum and Urrutia for finding a convex hull of a planar geometric communication network to networks of arbitrary topology.
\end{abstract}
\section{Introduction}
The communications network is a formal way to model communication in distributed systems with an arbitrary topology. A communication network is a graph in which each node has only local information about the graph and the nodes communicate by passing messages along its edges. Here, we consider the {\it geometric communication network} where each node of the communication network occupies a point on the plane and the distance between points is the Euclidean distance.  The goal is to study the communication complexity of fundamental computational geometry problems in this setting.

Our paper extends the work of Rajsbaum and Urrutia \cite{SJ11}, the only paper known to the authors which addresses the communication complexity of a geometric communication network. That paper considers the problems of finding a convex hull and external face in an asynchronous planar network . 

Here, we consider asynchronous geometric communication networks of arbitrary topology. We show that  Convex Hull , \cp~  and Diameter problems require $\Omega(n^2)$ bits of communication. In \cp~ and Diameter our result holds even if the network is planar. Our algorithms uses $o(m)$ words of communication, where $m$ is the number of edges in the network, to approximate each of these problems.

Variants of communication networks have been studied which are distinguished by the types of messages, the existence of a global clock (synchronous vs. asynchronous), and the amount of local knowledge known to the nodes. See \cite{AGPV90,pelegtext2000}.  Unless otherwise specified, we assume the CONGEST KT1  asynchronous model which allows each node to send (possibly different) messages of size $O(\log n)$ bits to all of its neighbors at the same time.

Each node knows its position in the plane and its unique ID. To achieve sublinear in $m$ communication, we assume each node knows the name of the IDs of its neighbors  in the graph (KT1). Thus we may model a set of nodes, say in a mobile network, whose positions are changing but the topology of the underlying communication network is unchanging.  Alternatively, one might use the stronger assumption, which we call {\it fixed-position}, that the ID is equal to the name of the point on the plane, and each node knows its neighbor's position.
The model used in \cite{SJ11} does not require that each node knows the IDs of its neighbors (KT0), since there are only $O(n)$ nodes in a planar graph and each node can afford to learn its neighbor's IDs by communicating with them. To achieve its lower bounds, it requires the additional assumption that each message may contain information about only a constant number of node IDs ( "atomic" model).
 We drop this assumption in order to achieve a sublinear communication cost in dense networks.  Our lower bounds are based on the well-known two-party communication complexity of Set Disjointness \cite{KS92, KNtext97}.

 We consider the following  problems defined on a geometric communication network.  In  Convex Hull, we require that each node learns whether it is on the convex hull or not, and if so, the IDs of both its neighbours in the convex hull. For the other problems, we require that every node learns the solution. 

  \begin{itemize}
 \item 
 {\bf Convex Hull: } Compute the convex hull of the geometric communication network, which is the smallest convex shape that contains all the points. 
  
 \item 
 {\bf Diameter:}
Determine the diameter of the geometric communication network, which is the  furthest distance between any pair of points.

\item
{\bf \cp:}  Determine a pair of closest points.

\end{itemize}

In addition, we formulate the following simple problem as a building block for our lower bounds, for any function $f(x,y)$:

\begin{itemize}
\item
{\bf Path Computation of $f(x,y)$}\\
 Given a communication network 
which is the path $P=(A=v_0,v_1, v_2, ..., v_m=B)$, assume initially, A knows $x$ and B knows $y$.
After exchanging messages, all nodes know $f(x,y)$. 
\end{itemize}
\subsection{Models}

The {\it communication network} is a connected graph $G=(V,E)$, where each node in $V$ is located at a distinct point specified by $(p_x, p_y)$ where $p_x,p_y$ are integers between 1 and $n^c$, and $c$ is a constant. Each node has a unique ID of $O(\log n)$ length. In the {\it fixed point} model, the ID is given by its location;  otherwise, the ID is given by a name which may be independent of its position. Each node knows its own ID and position.
 
We use standard terminology from \cite{pelegtext2000} to describe message size and local knowledge: In the CONGEST model, each node may send a message of size $b=O(\log n)$ to every neighbor in the same time step. In the CONGEST KT1 model, each node starts with knowledge of its neighbors' IDs. In the CONGEST KT0 model, each node has a dedicated port to each of its neighbors, but is unaware of its neighbors' IDs.   We write CONGEST-$1$ when we assume the message size is 1. 

Communication may assumed to be  {\it synchronous} in that there is a global clock and messages are sent out at the start of a round and received by the end of the round, or it is {\it asynchronous}, in which case there is no global clock. Delays between sending and receiving any message may be adversarily set, though all messages sent eventually arrive. Aside from an initial wake-up, actions are event-driven. We assume all nodes wake up at the start of the algorithm. 

We assume the asynchronous model, unless otherwise stated. Time in the asynchronous model by the length of the longest chain of events (receipt of a message) in which each event is waiting for the previous event to occur. 

Nodes in the geometric network have private randomness, that is,  coins which they may flip to decide on their actions whose outcomes are not known to other nodes. The input to the algorithm is set before any random bits are known and the delay in transmitting a message may depend on the contents of all messages sent so far, including that message, but are independent of the outcomes of coin flips which have not yet been executed. 

The communication complexity of an algorithm (also referred to as a ``protocol") is measured by the expected number of bits or messages used over the worst case input and delays. The complexity of a problem is the minimum complexity over all algorithms which correctly solve the problem.

\subsection{Techniques }
There is a simple algorithm to solve any problem on the network where each node needs to learn only $O(\log n)$ bits of information. Once a spanning tree is constructed, it suffices to send the location of all points to a central location which then computes the answer and sends each node the information required. Hence any problem where each node needs to learn an $O(\log n)$ bit solution can be solved in $O(n^2)$ messages in the weakest model, asynchronous CONGEST KT0, as a spanning tree can be constructed in $O(m)$ communication and time $O(n)$.   We prove these bounds on communication are optimal within a $\log n$ factor.

We find approximation algorithms which use $o(m)$ communication, in the asynchronous CONGEST KT1 model, which enables us to initially  randomly construct  a spanning tree in $o(m)$ communication for dense networks, see Section \ref{s:stconstruction}.
The other parts of the algorithms are deterministic and can be done in the KT0 model.

The  algorithms for Approximate  Diameter and Convex Hull first finds a small set of representative points which is an \ep -kernel of the points, and use this to solve the problem. This method has been used in approximation algorithms and in streaming models  for these problems and others.   The approximate Closest Pair algorithm uses a binary search over the possible locations to locate two points close together. This type of technique is not generally useful in a streaming model with only one pass.  
 
Our lower bounds hold for the randomized asynchronous KT1 model (and therefore in the KT0 model.) The Set Disjointness problem is a well-studied problem in the literature on 2-party communication. We reduce it to our problems to prove nearly matching lower bounds for these approximation results, and as well as the lower bounds for the exact problems. Essentially, we show that each of our problems embeds a problem we call Path Communication. A protocol for this in the asynchronous CONGEST KT1 model can be used to solve Set Disjointness in the randomized two-party communication model. 

\subsection{Related Work}
For a long time, it was believed that $\Omega(m)$ bits of communication were required to build a spanning tree in a distributed message passing model. This was first proved in the CONGEST KT1 model where each message could contain information about only a constant number of node IDs ( "atomic" model)  or the node ID space is exceedingly large, beyond polynomial in size, in 1990 \cite{ AGPV90}. It was also shown in the CONGEST KT0 model, when each node does not know the ID's of its neighbors \cite{KuttenPPRT13, KuttenPPRT15}. In 2015, the first algorithm to find a minimum spanning tree  (and spanning tree) with $\tilde{O}(n)$ messages in the CONGEST KT1 model was designed for the synchronous model \cite{KKT15}, followed by an $\tilde{O}(n^{3/2})$ message-algorithm with the optimal time steps $\Theta ( D + \sqrt{n})$ in 2018  \cite{DISC18GK, DISCGP18} and an $\tilde{O}(n^{3/2})$ message algorithm for the asynchronous CONGEST KT1 model was designed in 2018 \cite{MK18,MK19}.  \cite{DISCGP18} also explores other tradeoffs between time and messages for this problem in the synchronous model. 

A natural question to ask is what other interesting graph problems can be solved in $o(m)$ communication and polynomial time in a message-passing network. Recently, Robinson proved time-communication tradeoffs in the synchronous CONGEST KT1 model for the problem of constructing a graph spanner \cite{robinson20}.  

A synchronous model for two-party communication with a global clock was introduced by Impagliazzo and Williams \cite{Impagliazzo10} time-communication tradeoffs are given. A reduction of two-party Set Disjointness has been used to prove lower bounds on time in the CONGEST model in CONGEST KT1, for example, on verification problems on graphs,  by Das Sarma et. al. in  2011 \cite{Das11, Das12}.

\subsection{Organization of the paper}

We give preliminaries including the definition of Path Computation in Section 2, and we discuss Diameter, Convex Hull, and the $\epsilon$-kernel problem in Section 3, and the Closest Pair problem in Section 4.

\section{Preliminaries: Two-party communication, paths, and trees}
In this section, we review two-party communication lower bounds, formulate and analyze the path computation problem, and review known spanning tree construction results.
\subsection{ Two-party communication}
In the basic two-party communication model (see \cite{KNtext97}), there are two players, Alice and Bob, Alice knows $x$, Bob knows $y$ and they wish to evaluate a function $f(x,y)$. The players communicate to each other via a two-way channel. The algorithm is described by a {\it protocol tree} where each internal node is labelled by a player who sends a message, each leaf is labeled with outcome for $f(x,y)$,  and the tree branches depending on the value of the string. The cost of the protocol is the number of bits sent in the worst case path from the root to the leaf. A protocol with private randomness also  contains nodes where the player flips a coin. With public randomness, there is a distribution of deterministic tree protocols. Alice and Bob choose a random string independent of $(x,y)$ which selects one of these.  If we allow randomness (either private or public) then for any given input $(x,y)$, there is a probability of reaching a particular leaf, and the expected cost of the protocol on an input $(x,y)$ is the expected number of bits needed to reach a leaf.  We may then consider the worst case expected cost of the randomized algorithm to be the worst case cost over all inputs. The communication complexity of a problem is the maximum over all inputs of the expected cost.

The basic two-party model assumes messages arrive as soon as they are sent but there is no notion of a global clock, so that information is not gained if Alice waits or sends an empty message. In a 2010 paper, Impagliazzo and Williams introduced the {\it synchronous bit two-party communication model} in which there is a global clock. The levels of the protocol tree represent time steps, and at each time step, a party may send a 0,1, or $*$ , where $*$ means no message is sent. Cost is measured by the number of $0$'s and $1$'s sent. It is easy to see that Alice can communicate the contents of an $n$ bit string by sending a 1 at time equal to the value of the string and no other messages. There are clearly communication, time tradeoffs.  For example, Alice can send an $n$ bit string by sending one bit in exponential time equal to the value of the string. Or, in general with time polynomial in $n$, Alice can send an $n$ bit string using  a cost of $n/\log n$ in this model. See \cite{Impagliazzo10, robinson20}. Hence it does not make sense to talk about a lower bound on communication costs here without an upper bound on the time.
For the remainder of this section, we prove lower bounds on communication cost in models with no global clock. 

The following well-studied problem will be used as a basis for proving lower bounds in this paper:

\begin{itemize}
\item {\bf Set Disjointness:}
Alice has set $A$ and Bob has set $B$, $A$ and $ B \subset \{1,2,...,n\}$. Give a two party communication protocol to answer if $A\cap B=\emptyset$.
\end{itemize}
No algorithm can do better asymptotically than the one where Alice sends the $n$ bit characteristic vector of $A$ to Bob, and Bob returns the one bit solution to Alice.
\begin{theorem}{\cite{KS92,RAZBOROV}}\label{t:setdisjoint}
Alice has set $A$ and Bob has set $B$, $A$ and $ B \subset \{1,2,...,n\}$. Any two party communication protocol  to answer if $A\cap B=\emptyset$ has cost at least $\Omega(n)$, even if there is public randomness.  \end{theorem}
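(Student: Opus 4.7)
The plan is to prove the lower bound by the standard distributional-complexity (rectangle / corruption) approach, following the lines of Razborov. First, I would invoke Yao's minimax principle to reduce the randomized (public-coin) lower bound to a distributional lower bound for deterministic protocols: it suffices to exhibit a distribution $\mu$ on input pairs $(A,B)$ on which every deterministic protocol that errs with probability at most (say) $1/3$ under $\mu$ must use $\Omega(n)$ bits.

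For $\mu$, I would use the classical hard distribution supported on near-disjoint pairs: pick $A,B\subseteq\{1,\dots,n\}$ each of a fixed size $\Theta(n)$, with probability $3/4$ conditioning on $A\cap B=\emptyset$ and with probability $1/4$ conditioning on $|A\cap B|=1$. The key structural fact about two-party deterministic protocols is that the transcript of any protocol of cost $c$ induces a partition of the input space $\{(A,B)\}$ into at most $2^c$ combinatorial rectangles $R=X\times Y$, each of which is labeled by a single output. A correct low-error protocol must therefore cover the disjoint inputs by a small number of ``0''-labeled rectangles with only a tiny fraction of intersecting inputs.

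The main step, and the genuinely hard part, is the \emph{corruption bound}: show that every rectangle $R=X\times Y$ of $\mu$-measure at least $\delta$ under the disjoint marginal must contain at least a constant fraction of that much $\mu$-measure of intersecting inputs. This is proved by a combinatorial/analytic argument bounding, for any large rectangle, the number of ``near-disjoint'' pairs relative to disjoint pairs using properties of the uniform distribution on equal-size subsets; Razborov's proof does this with a delicate second-moment style calculation on the indicator of $R$. Once the corruption bound is established, correctness forces the total $\mu$-mass of ``0''-labeled rectangles on the intersecting side to be small, so each such rectangle covers at most an exponentially small fraction of the disjoint inputs, hence at least $2^{\Omega(n)}$ rectangles are needed and $c=\Omega(n)$.

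Finally, since the lower bound is established for deterministic protocols against the fixed distribution $\mu$, Yao's principle immediately promotes it to a lower bound against public-coin randomized protocols with constant worst-case error, as required by the theorem. The approach is robust enough that the same bound holds for one-way protocols and for the $\mathrm{AND}$-composed direct-sum formulation, but I would not need those extensions here; the plain two-way randomized bound is exactly the statement.
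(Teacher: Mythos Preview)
The paper does not actually prove this theorem; it is stated as a known result and attributed to \cite{KS92,RAZBOROV} without proof. Your proposal is therefore not being compared against any argument in the paper itself, but rather against the cited literature.

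As a sketch of the Razborov proof, your outline is accurate: Yao's principle, the hard distribution concentrated on disjoint and uniquely-intersecting pairs of $\Theta(n)$-size sets, the rectangle decomposition, and the corruption bound are exactly the ingredients of \cite{RAZBOROV}. You correctly identify the corruption lemma as the technical heart and do not attempt to reproduce it, which is appropriate for a proof plan. Note that \cite{KS92} (Kalyanasundaram--Schnitger) gives a somewhat different argument, and later simplifications (e.g., Bar-Yossef--Jayram--Kumar--Sivakumar via information complexity) provide alternative routes; any of these would justify the theorem, and the paper is agnostic about which one underlies the citation. For the purposes of this paper, the theorem is used as a black box, so your level of detail already exceeds what the paper requires.
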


\subsection{ Path Computation }
For any two-party function $f$, we can define a problem in a communication network consisting of a path.
\begin{itemize}
\item
{\bf Path Computation of $f$}\\
We are given a function $f(x,y)$ and a communication network 
which consists of a path $P=(v_0,v_1, v_2, ..., v_m)$, 
where $v_0= $ Alice,  $v_m= $ Bob. Alice knows $x$ and Bob knows $y$.
Assume Alice and Bob awake at the start. Nodes pass messages until  every node in $P$ learns $f(x,y).$
\end{itemize}


\begin{theorem} \label{t:2-to-path}
Let $C_2(f)$  be the randomized complexity of $f$ in the basic two-party communication model. Then 
 the randomized complexity of the Path Computation of $f$ in the CONGEST-$b$ KT1 model where each message has $b$ bits
  is at least $\Omega(\frac{m}{b} (C_2(f)- \lg m))$.

\end{theorem}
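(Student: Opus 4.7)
The plan is to reduce the basic two-party computation of $f$ to Path Computation of $f$. Suppose $\Pi$ is any randomized protocol that solves Path Computation of $f$ in the asynchronous CONGEST-$b$ KT1 model and uses at most $T$ messages in expectation on every input. I will build from $\Pi$ a randomized two-party protocol $\Pi'$ for $f$ whose expected bit cost is at most $bT/m + \lceil \lg m \rceil$. Since $\Pi'$ correctly computes $f$, its cost must be at least $C_2(f)$, so rearranging $bT/m + \lceil \lg m \rceil \geq C_2(f)$ immediately yields $T = \Omega((m/b)(C_2(f) - \lg m))$.

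To construct $\Pi'$, treat the path $P = (v_0, \ldots, v_m)$ as a shared object known to both parties. Alice uses her private randomness to choose an index $i \in \{1, \ldots, m\}$ uniformly at random and sends it to Bob using $\lceil \lg m \rceil$ bits; this designates the cut edge $e_i = (v_{i-1}, v_i)$. Alice then locally simulates nodes $v_0, \ldots, v_{i-1}$ (initializing $v_0$'s input to her two-party input $x$), while Bob locally simulates $v_i, \ldots, v_m$ (initializing $v_m$'s input to his input $y$). Because $\Pi$ is KT1 on a known path, each simulated node has the neighbor information it requires at initialization, and each player supplies the private coin flips of its own simulated nodes from its own randomness. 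Messages that $\Pi$ sends along edges lying entirely within Alice's half (or entirely within Bob's half) are handled internally, while each message $\Pi$ sends across the cut edge $e_i$ is forwarded verbatim over the two-party channel at a cost of $b$ bits. When $\Pi$ terminates, every simulated node---in particular $v_0$ and $v_m$---knows $f(x,y)$, so Alice and Bob both output it.

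For the cost analysis, fix any input $(x,y)$ and let $M_j$ denote the (random) number of messages that the simulated copy of $\Pi$ sends across edge $e_j$. By hypothesis $\sum_{j=1}^m \mathbb{E}[M_j] = \mathbb{E}[\sum_j M_j] \leq T$, so because $i$ is chosen uniformly and independently of the coin flips driving $\Pi$, $\mathbb{E}[M_i] \leq T/m$. The expected number of bits transmitted across the two-party channel during the simulation is therefore at most $b \cdot T/m$; adding the $\lceil \lg m \rceil$ bits used to communicate $i$, the total expected cost of $\Pi'$ on the worst input is at most $bT/m + \lceil \lg m \rceil$, completing the reduction.

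The main subtlety is verifying that the simulation faithfully preserves the asynchronous semantics of $\Pi$. This is unproblematic because the asynchronous adversary is free to choose any delays; Alice and Bob simply play the role of that adversary during their simulation, and any schedule they pick constitutes a valid execution of $\Pi$ whose expected message count is bounded by $T$. A secondary point is that the $\lg m$ slack could be removed by sampling $i$ with public randomness; since the two-party lower bounds to be invoked (such as Theorem \ref{t:setdisjoint}) hold even with public randomness, nothing is lost by paying this small additive term, and it is already absorbed into the $-\lg m$ in the theorem statement.
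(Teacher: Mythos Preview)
Your argument is correct and rests on the same idea as the paper's: cut the path at a uniformly random edge, have Alice simulate the left segment and Bob the right, and charge only for messages that cross the cut, so that the two-party cost is the average per-edge cost of $\Pi$ plus $\lceil\lg m\rceil$ to communicate the cut index. The paper, however, organizes this into two separate lemmas rather than a single reduction. First (Lemma~\ref{l:one-input}) it fixes an edge and shows, by the Alice/Bob simulation you describe, that \emph{some} input forces $C_2(f)$ bits across that edge; then (Lemma~\ref{l:random}) it stays entirely inside the path model and builds a new path protocol $\mathcal P'$ in which node $0$ picks the random cut and the two endpoints simulate the halves while relaying cross-cut traffic along the whole path, thereby equalizing the per-edge load to $C_{\mathcal P}(I)/m+\lceil\lg m\rceil$. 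Combining the two lemmas yields the bound. Your direct route is shorter; the paper's detour through an intermediate path protocol keeps the ``who speaks next'' question inside the asynchronous path model (where the restricted r-asynchronous schedule is already fixed) rather than inside the two-party protocol tree, which is the subtlety you flag in your final paragraph.
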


We first prove this for $b=1$ and for a model with a restricted adversary which we call the restricted model or {\it r-asynchronous} CONGEST-1 KT1. The restricted model takes away  adversary control over the order in which messages are received, while there is still no global clock, as in the basic two-party communication model. Hence the adversary only controls the inputs. E.g,  
 messages are received in the order they are sent, where there is choice as to which of two messages is received from two different neighbors, the message from the lower numbered node is received first, and if these rules leave a choice as to which of two messages to be sent first in opposite directions over the same link, the message from the lower numbered node is sent first.  
The nodes in the path have ID's $(0,1,...,m)$.
In the proofs below, the term ``path protocol" refers to an algorithm to solve Path Computation of $f$, in the r-asynchronous CONGEST-$b$ KT1 model. Now we show:

\begin{lemma} \label{l:one-input}
 For every path protocol in the r-asynchronous CONGEST-$1$ KT1 model, for every edge, there is at least one input which requires an expected $C_2(f)$  bits to be communicated over that edge. \end{lemma}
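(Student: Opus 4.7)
The plan is a reduction from basic two-party computation of $f$. Suppose for contradiction that some edge $e=(v_i,v_{i+1})$ admits a path protocol $\pi$ for which, on every input $(x,y)$, the expected number of bits that cross $e$ during an execution of $\pi$ is strictly less than $C_2(f)$. I will use $\pi$ to build a two-party protocol $\tau$ for $f$ whose expected cost on every input $(x,y)$ equals exactly that edge-crossing count, and is therefore $<C_2(f)$ on every input, contradicting the definition of $C_2(f)$ as the worst-case expected two-party complexity.

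To construct $\tau$, split the path at $e$ into a left half $L=\{v_0,\dots,v_i\}$ and a right half $R=\{v_{i+1},\dots,v_m\}$. Alice, holding $x$, locally simulates every node in $L$ using her private randomness, and Bob, holding $y$, does the same for $R$. Messages sent along edges internal to $L$ (or to $R$) are processed entirely within one simulator and impose no cost on the two-party channel. The only inter-simulator events are the bits that $\pi$ transmits along $e$: whenever the simulation calls for $v_i$ to send a bit to $v_{i+1}$, Alice forwards exactly that bit to Bob, and symmetrically from Bob to Alice. Each party outputs $f(x,y)$ as soon as its local copy of $v_0$ (resp.\ $v_m$) would, so correctness of $\tau$ follows from correctness of $\pi$. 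By construction, the number of bits sent across the two-party channel on input $(x,y)$ equals the number of bits that traverse $e$ in the corresponding simulated execution of $\pi$, which delivers the contradiction.

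The main obstacle I expect is verifying that this local simulation is genuinely faithful and that no hidden synchronization bits are required. This is precisely where the r-asynchronous restriction earns its keep: the message-ordering rules (FIFO per link and lowest-ID tiebreaker for otherwise simultaneous events) depend only on information local to each simulator, so Alice never has to consult Bob to decide how events inside $L$ are interleaved, and vice versa. The basic two-party model's property that silence carries no information is also compatible here, since each party acts only in reaction to bits actually arriving on $e$ and never needs to infer anything from their absence. Once this faithfulness is confirmed, the bit-for-bit correspondence between the two-party channel and the edge $e$ is immediate and the contradiction follows.
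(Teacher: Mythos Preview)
Your proposal is correct and follows essentially the same approach as the paper's proof: both argue by contradiction, have Alice simulate the nodes $v_0,\dots,v_i$ and Bob simulate $v_{i+1},\dots,v_m$, transmit over the two-party channel exactly the bits that would cross $e$, and conclude that this yields a two-party protocol of expected cost below $C_2(f)$. The paper frames the simulation as an induction on the global time step (number of messages sent so far), whereas you phrase it as a direct partition-and-simulate argument and explicitly note that the r-asynchronous ordering rules are what let each party resolve event interleavings locally; these are the same idea presented with different emphasis.
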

\begin{proof}
Assume by contradiction that $e=\{i,{i+1}\}$ is the edge closest to 0 which carries an expected $l< C_2(f)$ bits of communication for all input.  Then we claim there is a protocol $\pr$ on the basic 2-party model which uses less than $C_2(f)$ bits of communication for all input.

We prove this by induction on the time $t$:

Initially, Alice and Bob both know $i$ and they know the states of $0,1,...,i$ and ${i+1},{i+2},..,m$ respectively.
Assume by induction that Alice knows the states of $0,1,...,i$ and Bob knows the states of $i+1,i+2,...,m$ nodes up to time $t$, where time is measured by the total number of messages sent so far. 

Let $[j,k]$ denote the path between nodes $j$ and $k$.
Suppose at time $t$, $\pr$ sends a message across an edge from the ``sender" to a ``receiver".  If the sender and the receiver are in the path $[0,i]$, then since Alice knows the states of both the sender and the receiver, Alice can simulate this step in the protocol by determining the message the sender would send (flipping a private coin if the sender would do so) and updating the state of the receiver after that message is received.  Bob does this similarly if the sender and the receiver are in $[i+1,m]$.  
If the sender is $i$ and the receiver is $i+1$, Alice sends the message to Bob, and Bob updates the state of $i+1$ in its simulation, and vice versa, if the sender is $i+1$ and the receiver is $i$.  

No more than an expected $l$ bits are exchanged between Alice and Bob.  Each knows the state of at least one node in the path and therefore knows $f$. 
The exchange of messages between Alice and Bob is a two-party protocol: a message is sent and received immediately after, one at a time, as in a protocol tree; Alice's messages depend only on its state which is determined by $x$ and the messages previously received from Bob and similarly for Bob.  Hence, this contradicts the $C_2(f)$ lower bound.  \end{proof}

 \begin{lemma}\label{l:random}
 Suppose there is a path protocol $\pr$  such that in the r-asynchronous CONGEST-$1$ KT1 model for $f$ on input $I$ the expected communication cost  is no greater than $C_{\pr}(I)$ for a path of length $m$. Then there is a randomized path protocol $\pr'$ such that the expected number of bits communicated across any edge is no greater than $ C_{\pr}(I)/m + \lceil{ \lg m}\rceil$. 
 \end{lemma}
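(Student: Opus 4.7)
The plan is to derive $\pr'$ from $\pr$ by spreading $\pr$'s communication evenly across the $m$ edges via a uniform random cyclic shift, paying $\lceil \lg m\rceil$ per edge to announce the shift. Let $c_i(I)$ denote the expected number of bits sent across edge $e_i = \{v_{i-1},v_i\}$ by $\pr$ on input $I$. The hypothesis gives $\sum_{i=1}^{m} c_i(I) \le C_{\pr}(I)$, so the average per edge is at most $C_{\pr}(I)/m$; the goal is to massage $\pr$ so that every edge, not just the average, meets this bound up to an additive $\lceil \lg m\rceil$.

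I would construct $\pr'$ in two phases. In a preamble, Alice picks $r\in\{0,1,\ldots,m-1\}$ uniformly and forwards it bit by bit down the path, at a cost of $\lceil \lg m\rceil$ bits on every edge. In the main phase, the nodes simulate a shifted variant $\pr_r$ of $\pr$, in which the role of virtual edge $e_j$ of $\pr$ is played by the physical edge $e_{((j-1+r)\bmod m)+1}$ of $\pr'$. Under this correspondence, for fixed $r$ the expected traffic on physical edge $e_i$ equals $c_{((i-1-r)\bmod m)+1}(I)$, and averaging over the uniform $r$ gives $\frac{1}{m}\sum_{j=1}^{m} c_j(I) \le C_{\pr}(I)/m$ on every edge. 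Adding the preamble cost yields the claimed per-edge bound $C_{\pr}(I)/m + \lceil \lg m\rceil$.

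I expect the main obstacle to be making the rotation phase rigorous: a path is not a cycle, and the input-holding endpoints are fixed, so a literal cyclic shift of roles would require Alice and Bob to hand off $x$ and $y$ to distant virtual positions. My plan is to treat the cyclic indexing only as a relabelling of the $m$ edges, with Alice and Bob playing the roles of virtual nodes at positions $r$ and $(r+m)\bmod(m+1)$ respectively, and to drive the simulation purely locally: once every node has learned $r$ in the preamble, it can combine $r$ with its own physical ID (known from KT1) to determine which action of $\pr$ to execute over each incident edge. The remaining technical content is checking that under this interpretation the messages crossing any physical edge are in bijective correspondence with the messages crossing the correspondingly shifted virtual edge of $\pr$, so the averaging argument applies.
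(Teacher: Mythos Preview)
Your averaging intuition is correct, but the cyclic-shift implementation has a genuine gap that you flag and do not close. Two things break. First, whoever plays virtual node $0$ must hold $x$; under your shift that role lands at a physical interior node, and delivering $x$ there costs $|x|$ bits on every edge along the way---for Set Disjointness $|x|=\Theta(n)$, which swamps the $\lceil\lg m\rceil$ budget. Having Alice instead ``play virtual node $r$'' does not help, since $\pr$ is a fixed protocol that expects its input at virtual node $0$, not at $r$. Second, a cyclic relabelling of $m$ edges on a path with $m+1$ nodes cannot induce a consistent node bijection: virtual edges $e_{m-r}$ and $e_{m-r+1}$ share the virtual node $v_{m-r}$, yet they map to physical edges $e_m$ and $e_1$, which share no physical node. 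So the ``purely local'' simulation you propose is not well defined at the wrap-around point, and there is no bijection between messages on a physical edge and messages on a single shifted virtual edge.

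The paper avoids both obstacles by \emph{not} shifting. Node $0$ draws $a\in\{0,\ldots,m-1\}$ uniformly and broadcasts it (this is the $\lceil\lg m\rceil$ term). Then node $0$ internally simulates all of $[0,a]$ while node $m$ internally simulates all of $[a+1,m]$; the only bits that actually traverse the physical path are those $\pr$ would send across the single edge $\{a,a+1\}$, relayed end to end through every edge. Each physical edge therefore carries exactly the traffic of the random cut edge, and averaging over $a$ yields $C_{\pr}(I)/m$ per edge. The idea you are missing is that because Alice and Bob already hold $x$ and $y$, each can absorb an arbitrary contiguous block of virtual nodes on her or his side at zero communication cost; a random \emph{cut} then achieves the per-edge averaging that your random \emph{shift} cannot.
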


 \begin{proof}
We will define a protocol $\pr'$ which simulates $\pr$. We assume the first and last nodes wake up. 
\begin{enumerate}
\item 
Node 0  randomly picks a number $a \in \{0,1,...,m-1\}$ and communicates $a$ down the path to node $m$ using $\lceil{\lg m}\rceil$ bits per edge.
\item Node 0  simulates the
 nodes in $[0,a]$ and forwards whatever message $M$ that $a$ would have sent to $a+1$ down the path to $m$.
\item  After node $m$ receives the value $a$, it simulates the nodes in  $[a+1, m]$. When $a+1$ would send a message $M'$ to $a$ in the simulation, node $m$ sends this message down the path to node $0$.  
\item All nodes $a \in \{1,...,m-1\}$ upon receiving a message from $a-1$ passes it to node $a+1$ and upon receiving a message from $a+1$ passes it to $a-1$. 
\end{enumerate}

Let $C_{\pr}(I)$ denote the expected communication cost for input $I$ under protocol $\pr$. We can write $C_{\pr}(I)= \sum_{i =0}^{m-1} c_{\pr}(i,+,I ) + c_{\pr}(i+1,-,I )$, 
where $c_{\pr}(i, +,I)$ denotes the expected number of bits sent from node $i$ to node $i+1$ under protocol $\pr$ on input $I$, and, similarly, $c_{\pr}(i+1,-,I)$ denotes the expected number of bits received from node $i $ from node $i-1$.   

We observe that when $\pr'$ passes $b$ bits from node 0 to node $m$ or from node $m$ to node 0, the bits arrive at $m$ and the communication cost over all the edges is $mb$.
Then, by this observation, for any input $I$,  for a given choice of $a$, and for any edge in the path, under $\pr'$, the communication per edge (in both directions) is no greater than $\lceil{\lg m}\rceil$ plus the communication cost  $c_{\pr}(a, +,I) + c_{\pr}(a+1,-,I)$. Taking the expectation over all choices of $a$ chosen uniformly at random, we have
that the expected number of bits sent over any one edge  $\{i,i+1\}$ in both directions by $\pr'$ on input $I$, $c_{\pr'}(i, +,I)+c_{\pr'}(i,-,I)$ is: 
\begin{eqnarray*}
& \leq & \lceil{\lg m}\rceil + \sum_{i=0}^{m-1} Pr (a=i) [c_{\pr}(i,+,  I) +  c_{\pr}(i+1,-,  I) ] \\
 & =&  \lceil{\lg m}\rceil + 1/m [\sum_{i=0}^{m-1}  c_{\pr}(i,+,  I) +  c_{\pr}(i+1,-,  I)]\\
   &  = & \lceil{\lg m}\rceil +   (1/m) C_{\pr}(I) 
 \end{eqnarray*}
 This completes the proof of the lemma. \qed

The proof of Theorem \ref{t:2-to-path} follows:
By Lemma \ref{l:one-input}, we know that for any protocol $\pr'$ and any edge $\{i,i+1\}$, there is some $I$ such that the expected number of bits passed in both directions over that edge  $c_{\pr'}(i,+, I)+c_{\pr'}(i+1,-,I) \geq C_2(f)$.
Hence by Lemma \ref{l:random}, $(1/m)C_{\pr}(I) +  \lceil{\lg m}\rceil  \geq C_2(f)$ which implies that $C_{\pr}(I) \geq m (C_2(f)  - \lceil{ \lg m}\rceil)$. Let $\pr$ be the optimal algorithm for the path computation of $f$ and $I$ be its worst case input, then the communication complexity for the r-asynchronous CONGEST KT1 model $C_{m,r}(f) \geq C_{\pr}(f,I) \geq m (C_2(f)- \lceil{ \lg m }\rceil)$.  

We observe that the lower bound for the asynchronous CONGEST-1 KT1 model $C_{m}(f) \geq  C_{m,r}(f)$ since the restricted model can be simulated by an adversary strategy in the non-restricted model. 
If we consider the CONGEST-$b$ KT1 model, then as $b$ bits are communicated with each message, we can conclude that the complexity of computing $f$ in this model is at least $C_m(f)/b$ or $\frac{m}{b} (C_2(f)- \lceil{\lg m}\rceil)$.\end{proof} 

\subsection{Spanning Tree Construction}\label{s:stconstruction}
Here, we review what is known concerning the construction of spanning trees in a communications network as these will be the building blocks for our algorithms. Note that both algorithms are randomized and succeed with high probability.
\begin{itemize}
\item
In CONGEST KT1, there is a synchronous algorithm which runs in time and communication $\tilde{O}(n)$ and an asynchronous model with communication $\tilde{O}(\min\{n^{3/2}, m\})$. 
\item
In CONGEST KT0, $\Omega(m)$ messages are required to construct a spanning tree in the synchronous (and asynchronous) model.
\end{itemize}

With the exception of the subroutine to build the spanning tree, our algorithms are  designed to work in the weakest model described in this paper, the asynchronous CONGEST KT0 model. We state the communication cost of each of our algorithms as a function of the communication needed to build the spanning tree on $n$ nodes, plus the communication needed for the other part of the algorithm.

\section{Diameter and Convex Hull}
We  present the $\epsilon$-kernel problem, which we use as a subroutine to compute approximations for Diameter and Convex Hull. Then we show lower bounds of $\Omega(n^2)$ bits on the expected communication needed for exact solutions for Diameter and Convex Hull. 

\subsection{\ep -Kernel}

Let $\mathcal{S}^{d-1}$ denote the unit sphere centered at the origin in $\mathcal{R}^d$. For any set $P$ of points in $\mathcal{R}^d$, a point $u \in \mathcal{S}^{d-1}$ is called a {\it direction}. Given a set of points $P\in R^d$, the {\it directional width of $P$ in direction $u$}, denoted $w(u,P)=max_{p\in P} <u,p> - min_{p\in P} <u,p> $, where $<.,.>$ is the standard inner product. An  \ep -kernel of $P$ is a subset of $P$ which approximates $P$ with respect to directional width. More formally: A subset $Q \subset P$ is called an \ep -kernel of $P$ if for each $u \in \mathcal{S}^{d-1}$, 
$(1-\epsilon)w(u,P) \leq w(u,Q)$ \cite{AgarwalHV04, Agarwal05}.   In this paper, $d=2$.

Algorithm \ref{a:eps-kernel} computes an \ep -kernel of the set of points in  $\mathcal{R}^2$ by using the idea of rounding directions, which was introduced in \cite{Agarwal92}.

Let $loc(p)$ denote the position of node $p$. Let $\epsilon < 1$ and $\delta= \sqrt{2\epsilon}$. In Algorithm \ref{a:eps-kernel}, $A$ is the set of lines passing through $(0,0)$ which, with the $x$-axis, make angles $\delta * i$ for  $i=1, ..., \lceil{\pi/\delta}\rceil$, and $|A|=\theta(1/\sqrt{\epsilon})$. Given a set of points $P$ and a line $a$ through the origin, the {\it extreme points for the line } is the pair of points $r$ and $s$ in $P$ whose projections onto the line are furthest apart on the line: $r$ is a point such that $<r,u>=max_{p\in P} <u,p>$ and $s$ is a point such that $ <s,u>=min_{p\in P} <u,p> $, where $<.,.>$ is the standard inner product.
 
A spanning tree $ \st$ is first constructed. Starting with the leaves of $\st$, each node compares its own location with the location of the extreme points received from its children (if it is not a leaf) to find the best two candidates for the extreme points for each direction, and sends them to its parent, until eventually the root receives all the extreme points. The set of these extreme points is a \ep -kernel for this set of points.

 \begin{algorithm}
 
\caption{Asynchronous $\epsilon$-Kernel\label{a:eps-kernel}}
\begin{algorithmic}[1]
\Procedure{$\epsilon$-Kernel}{P is a set of $n$ nodes} 
  \State Find a spanning tree $\st$ with leader $L$.
  \State $L$ broadcasts  $<start>$ when $\st$ is complete. 
  \State When a leaf $p$ receives $<start>$,  it sends $S_p=\{loc(p)\}$ to its parent.
  \State When a non-leaf node $p\neq L$ receives $S_q$ from all of its children $q$, $p$ computes the set $S_p$ of extreme points for all lines in $A$ among the point set $\bigcup_{q} S_q \cup \{loc(p)\}$ and passes $S_p$ to its parent.
    
  \State When $p= L$ receives $S_q$ from all of its children $q$, $L$ computes the set $S_L$ of extreme points for all lines in $A$ among the point set $\bigcup_{q} S_q \cup \{loc(L)\}$.
 
      \EndProcedure
\end{algorithmic}
\end{algorithm}

\begin{theorem}

$S_L$ is an \ep -kernel for the point set $P$.
\end{theorem}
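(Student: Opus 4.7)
The plan is to break the proof into three stages: an inductive correctness claim for what the tree computation produces, an angular covering claim for $A$, and a trigonometric width-approximation estimate.

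First, I would show by induction on the height of a node $p$ in $\mathcal{ST}$ that $S_p$ equals the set of extreme points, for every line $a \in A$, of the sub-tree point set $P_p := \{\mathrm{loc}(q) : q \text{ in the subtree rooted at } p\}$. The base case is immediate since a leaf sets $S_p = \{\mathrm{loc}(p)\}$, which is trivially extreme in every direction on a singleton. For the inductive step, the extreme point of $P_p$ in a direction $u$ must be either $\mathrm{loc}(p)$ or the extreme point in direction $u$ of some $P_q$ for a child $q$, and by the induction hypothesis all such candidates lie in $\bigcup_q S_q \cup \{\mathrm{loc}(p)\}$. Applying this at the root shows $S_L$ contains the two extreme points of $P$ for every line in $A$.

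Second, I would verify that $A$ covers all directions to within angle $\delta/2$. The lines in $A$ make angles $\delta, 2\delta, \dots, \lceil \pi/\delta \rceil \delta$ with the $x$-axis, and since a line represents two antipodal directions, every direction $u \in \mathcal{S}^1$ lies within angular distance $\delta/2 = \sqrt{\epsilon/2}$ of some line $a \in A$. Fix such a $u$, let $u'$ be the corresponding direction, and let $\theta \leq \delta/2$ denote the angle between them.

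Third, and this is the core estimate, I would show $w(u, S_L) \geq (1-\epsilon)\, w(u, P)$. Let $r,s \in P$ achieve $w(u,P)$ and let $r', s' \in S_L$ be the extreme points in direction $u'$ (which exist in $S_L$ by stage one). Decomposing $u = \cos\theta \cdot u' + \sin\theta \cdot u'^\perp$ and using extremality of $r', s'$ in direction $u'$, one gets
\[
\langle u, r' - s' \rangle \;\geq\; \cos\theta \cdot w(u', P) - \sin\theta \cdot w(u'^\perp, P),
\]
and $w(u',P)$ itself satisfies $w(u',P) \geq \cos\theta \cdot w(u,P) - \sin\theta \cdot w(u'^\perp, P)$ by the symmetric decomposition. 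With $\theta \leq \sqrt{\epsilon/2}$, the estimates $\cos\theta \geq 1 - \epsilon/4$ and $\sin\theta \leq \sqrt{\epsilon/2}$ combine to yield the claimed $(1-\epsilon)$ factor, provided the perpendicular-width terms $\sin\theta \cdot w(u'^\perp, P)$ are absorbed into $w(u, P)$.

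The main obstacle is the last absorption step: in full generality $w(u'^\perp, P)$ can be arbitrarily larger than $w(u,P)$, so a naive bound fails. The standard remedy, which I would invoke rather than rederive, is the reduction of \cite{Agarwal92, AgarwalHV04}: first apply an affine transformation that normalizes the bounding box of $P$ to a unit square (changing widths by controlled factors and preserving the $\epsilon$-kernel property), so that the perpendicular widths and $w(u,P)$ are comparable up to a constant, and then run the rounding argument in the fat setting. This is precisely the setting in which $\delta = \sqrt{2\epsilon}$ gives the desired $(1-\epsilon)$ approximation and completes the proof.
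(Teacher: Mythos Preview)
Your route diverges from the paper's at stage three. The paper never decomposes the query direction as $u=\cos\theta\cdot u'+\sin\theta\cdot u'^{\perp}$ with $u'\in A$ near $u$. Instead, for an arbitrary pair $p,q\in P$ it picks a line $a\in A$ within angle $\delta$ of the line $\overrightarrow{pq}$ through that particular pair (not near the query direction $a^*$), and applies only the single bound $d(p_a,q_a)\geq\cos(\delta)\,d(p,q)\geq(1-\epsilon)\,d(p,q)\geq(1-\epsilon)\,d(p_{a^*},q_{a^*})$, using $\cos\delta\geq 1-\delta^2/2=1-\epsilon$. No orthogonal-width term ever appears and no fattening is invoked. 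Your stage one (the tree induction) is fine; the paper simply takes it for granted.

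The obstacle you raise is genuine for your decomposition: when $u'$ is chosen near the query direction, $\sin\theta\cdot w(u'^{\perp},P)$ cannot be absorbed into $w(u,P)$ for thin point sets, and the affine normalisation you cite is indeed the standard fix. But that normalisation is not part of Algorithm~1, so invoking it would prove a statement about a modified procedure rather than the one in the theorem; that is a gap in your proposal as a proof of the stated result. The paper avoids your obstacle by tying $a$ to $\overrightarrow{pq}$ instead of to $a^*$---though you may observe that what this argument literally yields is $w(a,S_L)\geq(1-\epsilon)\,w(a^*,P)$ for some $a$ depending on the extremal pair, which suffices for the downstream diameter and hull applications but is not quite the $\epsilon$-kernel inequality $w(a^*,S_L)\geq(1-\epsilon)\,w(a^*,P)$ in the original direction.
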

\begin{proof}
Given any pair of points $p$, $q$, let $d(p,q)$ denote the distance between points $p$ and $q$, and let $p_a$ denote the projection of point $p$ onto line $a$. Let $\mathcal{S}$ denote the unit sphere centered at the origin in $\mathcal{R}^2$, we need to show that for any direction $a^* \in\mathcal{S}$, there is a direction $a \in A$ such that for any $p,q \in P$, $$(1-\epsilon)d(p_{a*},q_{a*}) \leq d(p_a, q_a)$$ 

So it is sufficient to show $(1-\epsilon)d(p,q) \leq d(p_a, q_a)$, as $d(p_{a*},q_{a*}) \leq d(p, q)$.

Let $\overrightarrow{pq}$ denote the line passing through $p$ and $q$, and let $\theta(p,q,a)$ be the acute angle formed by $\overrightarrow{pq}$ and $a$.
Then it is not hard to observe (see figure \ref{Fig:epkernel}):

$$ d(p_a,q_a) \leq d(p,q) \leq d(p_a, q_a) /cos(\theta(p,q,a))  $$

As there is a line forming an angle with the $x$-coordinate at every $\delta$ angle interval, there must be some line $a \in A$ forming an angle $\leq \delta$ with $\overrightarrow{pq}$. 

Finally, we observe that the Taylor series implies
$cos(\delta)\geq 1 -\delta^2/2$ . In  particular, for $\delta=\sqrt{2\epsilon}$, then $cos(\delta) \geq 1-\epsilon$.\end{proof}

\begin{figure}

\centering
\includegraphics[scale=0.23]{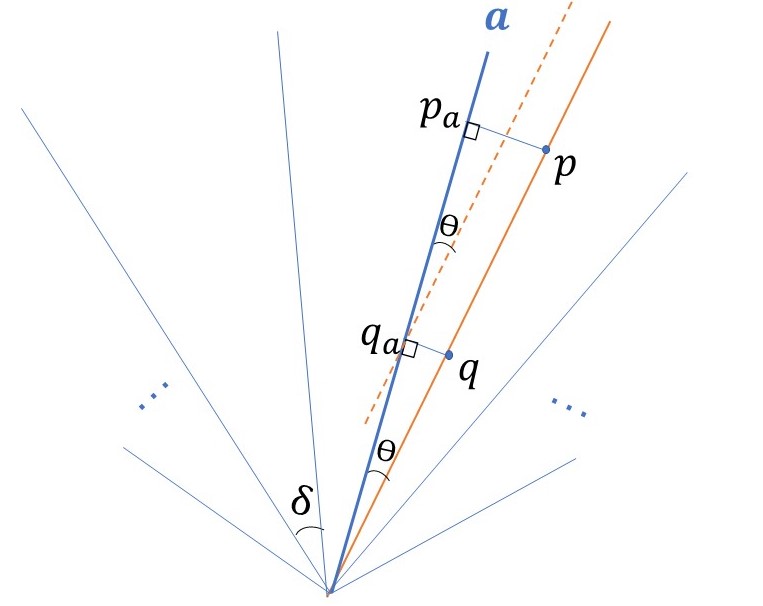}
\caption{\ep - kernel proof \label{Fig:epkernel}}
\end{figure}

\noindent
{\it Analysis of communication and time complexity of Algorithm \ref{a:eps-kernel}:}  Each node passes up no more than $2|A|=O(1/\sqrt{\epsilon})$ locations using  $O(diam/\sqrt{\epsilon}
)$ time and $ O(n/\sqrt{\epsilon})$ words  of size $O(\lg n)$ where $diam(\st)$ is the diameter of  $\st$.    In addition, there is the initial communication cost of computing the spanning tree. 

\begin{theorem}
\label{thdiamapprox}
There exists an asynchronous algorithm in CONGEST KT1 which computes an $\epsilon$-kernel in a graph of $n$ nodes with $O(\frac{n}{\sqrt{\epsilon}})$ messages in time $O(\frac{diam(\st)}{\sqrt{\epsilon}} )$ plus the costs of constructing $\st$.   
\end{theorem}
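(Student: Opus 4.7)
The statement is really the complexity analysis of Algorithm \ref{a:eps-kernel}; correctness of the returned set was just established. My plan is to bound separately (i) the size of each message passed in the convergecast, (ii) the total message count summed over the $n-1$ tree edges, and (iii) the asynchronous time as the longest chain of dependent events, and then fold in the spanning-tree construction cost stated in the theorem.

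First I would bound $|S_p|$ at every node. Because $|A| = \lceil \pi/\delta\rceil = O(1/\sqrt{\epsilon})$ with $\delta = \sqrt{2\epsilon}$, and each line in $A$ contributes at most two extreme points (one achieving the maximum projection and one the minimum projection among the points currently seen by $p$), we have $|S_p| \le 2|A| = O(1/\sqrt{\epsilon})$ at every $p$. Each location is a pair of integers of $O(\log n)$ bits, so each transmission of $S_p$ from a node to its parent costs $O(1/\sqrt{\epsilon})$ words of the CONGEST channel.

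For the message count I would first dispatch the initial top-down broadcast of the start signal, which uses $O(n)$ messages along $\st$. In the bottom-up phase, every non-root node sends exactly one set to its parent, so summing over the $n-1$ tree edges yields $(n-1) \cdot O(1/\sqrt{\epsilon}) = O(n/\sqrt{\epsilon})$ messages, which dominates the broadcast cost.

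For the time bound, I would use the definition that asynchronous time equals the length of the longest chain of causally dependent events. The start broadcast reaches every node within time $O(diam(\st))$. In the convergecast a node begins forwarding only after it has received $S_q$ from each child $q$, so the critical chain follows some root-to-leaf path of length at most $diam(\st)$; along each edge of that chain the CONGEST constraint forces the $O(1/\sqrt{\epsilon})$ words of a single $S_p$ to be transmitted one per round. Multiplying yields $O(diam(\st)/\sqrt{\epsilon})$. The main subtlety is precisely this last multiplication: one must argue that, even though the adversary may reorder and delay, the longest-chain definition of asynchronous time lets us bound the cost by the tree depth times the per-edge word count rather than by any larger quantity involving the width of $\st$, since distinct subtrees pipeline concurrently and do not extend the critical chain. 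Adding the separately stated cost of constructing $\st$ then completes the argument.
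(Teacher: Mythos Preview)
Your proposal is correct and follows essentially the same approach as the paper: bound $|S_p|\le 2|A|=O(1/\sqrt{\epsilon})$, multiply by the $n-1$ tree edges for the message count, and multiply by the tree depth for the time. The paper's own analysis is terser (a single paragraph) and omits the explicit discussion of the start broadcast and of why concurrent subtrees do not lengthen the critical chain, but the argument is the same.
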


\subsection{Approximate Diameter}
 \begin{algorithm} 
 
\caption{Asynchronous $\epsilon$-Diameter \label{a:ep-diam}}

\begin{algorithmic}[1]
\Procedure{$\epsilon$-Diameter}{P is a set of $n$ nodes} 
  \State Find \ep -kernel $S$ and spanning tree $\st$ with leader $L$.
  \State If $p=L$, $L$ computes the two farthest points $r$ and $s$ in $S$ and broadcasts $r$ and $s$ through $\st$.
 
      \EndProcedure
\end{algorithmic}
\end{algorithm}

\begin{theorem}
\label{thdiamapprox1}
There exists an asynchronous algorithm in CONGEST KT1 which computes a $(1-\epsilon)$-\diamapprox~ in a graph of $n$ nodes with $O(\frac{n}{\sqrt{\epsilon}})$ messages in time $O(\frac{diam(\st)}{\sqrt{\epsilon}})$ plus the costs of computing $\st$. 
\end{theorem}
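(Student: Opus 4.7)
The plan is to verify both correctness and complexity for Algorithm \ref{a:ep-diam} by reducing each claim to a property already established for the $\epsilon$-kernel subroutine.

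For correctness, I would argue as follows. Let $p^*, q^* \in P$ be a true diametric pair, so that $\mathrm{diam}(P) = d(p^*,q^*)$, and let $u^* \in \mathcal{S}^{1}$ be the unit direction along the line $\overrightarrow{p^*q^*}$. Then by definition of directional width, $w(u^*, P) = d(p^*, q^*) = \mathrm{diam}(P)$. Since $S = S_L$ is an $\epsilon$-kernel of $P$ (by the previous theorem), we have $w(u^*, S) \geq (1-\epsilon) w(u^*, P) = (1-\epsilon)\mathrm{diam}(P)$. But the directional width $w(u^*, S)$ is realized by two points $r', s' \in S$ whose projections onto $u^*$ differ by $w(u^*, S)$, and clearly $d(r', s') \geq w(u^*, S)$. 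Hence the farthest pair $r, s$ in $S$ computed by $L$ satisfies $d(r, s) \geq d(r', s') \geq (1-\epsilon)\mathrm{diam}(P)$, while trivially $d(r, s) \leq \mathrm{diam}(P)$. This proves the output is a $(1-\epsilon)$-approximation of the diameter.

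For the complexity, I would add up the costs of the three phases of the algorithm. Phase one (constructing $\st$ and computing the $\epsilon$-kernel $S$) costs $O(n/\sqrt{\epsilon})$ messages and $O(\mathrm{diam}(\st)/\sqrt{\epsilon})$ time on top of the spanning tree construction, by Theorem \ref{thdiamapprox}. Phase two (local computation of the two farthest points in $S$ at $L$) is free in the communication model. Phase three (broadcasting the two IDs/locations $r$ and $s$ down $\st$) uses $O(n)$ messages, each of $O(\log n)$ bits, in time $O(\mathrm{diam}(\st))$. Summing gives $O(n/\sqrt{\epsilon})$ messages and $O(\mathrm{diam}(\st)/\sqrt{\epsilon})$ time, plus the cost of constructing $\st$, matching the claimed bounds.

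The only mildly subtle step is correctness: one must be careful that the $\epsilon$-kernel, which is defined via an approximation guarantee for every direction in the whole sphere $\mathcal{S}^{1}$ (not only the finite set $A$ of rounded directions used internally), really yields an approximate diameter. The previous theorem gives exactly this universal guarantee, so the argument for the diametric direction $u^*$ goes through without needing $u^* \in A$. Given that, the theorem follows straightforwardly by combining the correctness of the kernel with the additive cost of a single broadcast.
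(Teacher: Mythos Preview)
Your proposal is correct and follows essentially the same approach as the paper: apply the $\epsilon$-kernel guarantee in the direction $u^*$ determined by a true diametric pair to get $\mathrm{diam}(S)\geq (1-\epsilon)\,\mathrm{diam}(P)$, then add the $O(n)$ broadcast cost to the $\epsilon$-kernel cost from Theorem~\ref{thdiamapprox}. Your write-up is in fact a bit more explicit than the paper's (you spell out $d(r',s')\geq w(u^*,S)$ and the trivial upper bound $d(r,s)\leq \mathrm{diam}(P)$), but the argument is the same.
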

\begin{proof}
In Algorithm \ref{a:ep-diam}, once the \ep -kernel $S$ is received by the leader, the leader finds the two points furthest from each other in $S$ and broadcasts them. The diameter of $S$ is a $(1-\epsilon)$-approximation for the diameter of $P$ by the definition of $\epsilon$-kernel. More formally, let $p_1$ and $p_2$ be the points in $P$ furthest from each other. Assume $p_1$ and $p_2$ are extreme points of $P$ for some direction $a* \in \mathcal{S}^{d-1}$. As $S$ is an \ep -kernel of $P$, then for each $u \in \mathcal{S}^{d-1}$ including $a*$, 
$(1-\epsilon)w(u,P) \leq w(u,S)$, where $w(u,p)$ is the directional width of $P$ in direction $u$: $w(u,P)=max_{p\in P} <u,p> - min_{p\in P} <u,p> $.

The number of messages in this algorithm is the number of messages we need to compute \ep -kernel plus $O(n)$ messages to broadcast the two farthest points.\end{proof} 

\subsection{Approximate Convex Hull}
 There are different notions of {\it approximate convex hull}. An \ep -hull is one of the commonly used notions and it is closely related to \ep-kernel.

Let $P$ be a set of $n$ point in $\mathcal{R}^d$. Let $C(P)$ denotes the convex hull of $P$. An {\it \ep -hull} $S$ of $P$ such that all the points in $P$ are either in $C(S)$ or within distance {\ep} from $C(S)$. 

There have been extensive studies for \ep -kernel and \ep -hull. Much effort has been done to find a fast algorithm that computes a small \ep -kernel in the sequential models. There are also extensive work for approximating Convex Hull in the streaming model \cite{Suri, AgarwalHV04, Kumar17, Chan2006, Arya}.

An {\it \ep -kernel} is a multiplicative error version of  \ep -hull. Let $Diam(P)$ be the diameter of the point set $P$. If we have an approximation of a diameter, then we can set $\epsilon$ in a way that \ep -kernel will give us a desired $\epsilon*Diam(P)$-hull \cite{Agarwal05}.

\begin{algorithm} 
\caption{Asynchronous $\epsilon$-Approximate Convex Hull \label{a:ep-ch}}

\begin{algorithmic}[1]
\Procedure{$\epsilon$-Convex Hull}{P is a set of $n$ nodes} 
  \State Find \ep -kernel $S$ and spanning tree $\st$ with leader $L$.
  \State If $p=L$, $L$ broadcasts the vertices of convex hull of $S$ through $ST$.
 
      \EndProcedure
\end{algorithmic}
\end{algorithm}

\begin{theorem}
\label{thchapprox}
There exists an asynchronous algorithm in CONGEST KT1 which computes an $\epsilon$- Approximate Convex Hull and $f(\epsilon)-hull$ in a graph of $n$ nodes with $O(\frac{n}{\sqrt{\epsilon}})$ messages in time $O(\frac{diam(\st)}{\sqrt{\epsilon}})$ plus the costs of computing $\st$, where $f(\epsilon)=\epsilon*Diam(P)$. 
\end{theorem}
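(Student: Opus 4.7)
The plan is to follow the same two-phase structure used for $\epsilon$-Diameter: first invoke Theorem \ref{thdiamapprox} to obtain the $\epsilon$-kernel $S$ at the leader $L$ (together with the spanning tree $\st$), and then have $L$ locally compute the convex hull of $S$ and broadcast it down $\st$. Since $|S|=O(1/\sqrt{\epsilon})$, the local convex-hull computation at $L$ is free in the communication model, and the broadcast of at most $|S|$ vertex identifiers down $\st$ costs $O(n/\sqrt{\epsilon})$ messages in time $O(diam(\st)/\sqrt{\epsilon})$. Adding this to the cost of Theorem \ref{thdiamapprox} gives exactly the claimed complexity on top of constructing $\st$.

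The substantive content is the correctness argument, which splits into the two senses of approximation asserted in the statement. For the $\epsilon$-approximate convex hull claim, I would argue that since $S \subseteq P$, the vertices of the convex hull of $S$ are genuine points of $P$ and hence form a valid ``approximate hull'' in the combinatorial sense. For the $f(\epsilon)$-hull claim with $f(\epsilon)=\epsilon\cdot Diam(P)$, the key step is to show that every $p \in P$ lies within Euclidean distance $\epsilon\cdot Diam(P)$ of $C(S)$. I would do this by taking any $p \in P \setminus C(S)$ and letting $q$ be its nearest point on $C(S)$; then the unit vector $u$ from $q$ to $p$ is a separating direction, so $\langle u, p\rangle - \max_{s\in S}\langle u, s\rangle \geq d(p,q)$. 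Combining this with $w(u,P) - w(u,S) \leq \epsilon\, w(u,P)$ from the definition of an $\epsilon$-kernel and the trivial bound $w(u,P)\leq Diam(P)$ gives $d(p,q)\leq \epsilon\cdot Diam(P)$, as required.

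The main obstacle is the geometric step of converting the multiplicative directional-width guarantee of an $\epsilon$-kernel into an additive Euclidean distance guarantee with respect to $C(S)$. In particular, one has to be careful that the separating direction at $q$ actually witnesses the gap $d(p,q)$ inside $w(u,P) - w(u,S)$; this uses that $q$ is a boundary point of $C(S)$ and therefore $\max_{s\in S}\langle u, s\rangle = \langle u, q\rangle$ for the outward normal $u$. Once this geometric identity is in hand, the remainder of the proof is routine bookkeeping, and the communication and time bounds follow immediately from Theorem \ref{thdiamapprox} together with the $O(n/\sqrt{\epsilon})$-message broadcast of the $O(1/\sqrt{\epsilon})$ hull vertices through $\st$.
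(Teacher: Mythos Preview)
Your proposal is correct and follows essentially the same approach as the paper: compute the $\epsilon$-kernel $S$ at the leader via Theorem~\ref{thdiamapprox}, have the leader take the convex hull of $S$ and broadcast its $O(1/\sqrt{\epsilon})$ vertices down $\st$, and then bound the distance from any $p\in P$ to $C(S)$ by $w(u,P)-w(u,S)\le \epsilon\,w(u,P)\le \epsilon\cdot Diam(P)$. Your treatment of the geometric step is in fact more careful than the paper's, which asserts the inequality $d(x,S)\le w(u,P)-w(u,S)$ ``for any direction $u$'' without singling out the outward normal at the nearest point as you do.
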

\begin{proof}
In Algorithm \ref{a:ep-ch}, once the \ep -kernel $S$ is received by the leader, the leader finds the convex hull of $S$ and broadcasts its vertices, so that every node knows whether it is on the convex hull and knows its neighbors in both the clockwise and the counterclockwise directions.
The \ep -kernel is a multiplicative error version of \ep -hull \cite{Agarwal05}. By definition, \ep -kernel $S$ approximates the convex hull of $P$ within a  $1-\epsilon$ factor in any direction. 

The convex hull of $S$ is an $\epsilon* Diam(P)$-hull, as all the points in $P$ are either in the convex hull of $S$ or within $\epsilon Diam(P)$ from the convex hull of $S$, where $Diam(P)$ is the diameter of $P$. This is because if $d(x,S)$ is the distance of any point $x\in P \setminus S$ from convex hull $S$, then $d(x, S) \leq w(u,p)-w(u,S)\leq \epsilon w(u,p)$ for any direction $u \in \mathcal{S}^{d-1}$, where $w(u,p)$ is the directional width of $P$ in direction $u$. Thus $ d(x,S)\leq \epsilon*Diam(P)$, and the \ep -kernel is an $\epsilon*Diam(P)$-hull. \end{proof}

\subsection{Lower bound for diameter}
\label{sec:lowerbounddiam}
 
We reduce the problem of Path Set Disjointness to Diameter. Similar ideas have been used to show lower bounds in a streaming model \cite{lowerbound}.  
Given a path of $m=n$ edges with endpoints Alice and Bob, where Alice and Bob each know subsets  $A$ and $B$ resp., of $\{1,...,n\}$, the communication network must compute the Set Disjointness problem on $(a,b)$, where $a=(a_1,...,a_n)$ and $b=(b_1,...,b_n)$ are the characteristic vectors of $A$ and $B$.  Consider the following   graph \gab$(V,E)$ where $|V|=3n$. 
Let $V= W\cup U\cup H$, $W=\set{w_1,w_2,...,w_n}$, $U=\set{u_1,u_2,...,u_n}$, $H=\set{h_1,h_2,...,h_{n}}$.
  
We draw evenly spaced lines $L_1,...,L_n$ through the center of the $[n^c] \times [n^c]$ grid, for a constant $c\geq 2+2/\lg n$. Each line $L$ is composed of a {\it low} ray $l(L)$ which goes from the center to below the center and the {\it upper} ray  $u(L)$ which goes from the center to above the center. The lines are drawn so that the upper rays are at equally spaced angles from each other and between 0 and $\pi$ with the $x$ axis. The points of the Diameter problem are determined as follows (Figure \ref{Fig:diameter1}), where $r=n$ and $R=2n^2$. 

\begin{itemize}
\item For each $1\leq i \leq n$, if $a_i=0$, $w_i$ is on $u(L_i)$ at distance $r$ from the center and otherwise at distance $R$. 
\item For each $1 \leq i \leq n$, if $b_i=0$, $u_i$ is on $l(L_i)$ at distance $r$ from the center and otherwise at distance $R$.
\item The $h_i$'s, where $1\leq i \leq n$, are located on the grid points inside the circle centered at the origin with radius $r$. 

\end{itemize}

In the above description of the location of the points, in the case that any coordinate is not an integer coordinate, we round the point as follows:
For each $1\leq i \leq n$, if the angle of $u(L_i)$ with the $x$ axis is less than or equal $\pi/2$, then we round $w_i$ to the top right corner of the cell that contains $w_i$, and if the angle is more than $\pi/2$ we round $w_i$ to the top left corner of the cell that contains $w_i$. For each $1\leq i \leq n$, if the angle of $l(L_i)$ with the $x$ axis is less than or equal $3\pi/2$, then we round $w_i$ to the bottom left corner of the cell that contains $u_i$, and if the angle is more than $3\pi/2$ we round $u_i$ to the bottom right corner of the cell that contains $u_i$.

Note that a circle centered at the origin with radius $r$ contains at least $\pi r^2$ lattice points \cite{phdLevin}. So choosing $r=n$ guarantees that there are more than $n$ grid points inside the circle to locate points of $H$. As $R=2n^2$, no two points of $W$ and $U$ are in the same grid cell and they would not round to the same grid point.

\begin{figure}

\centering
\includegraphics[scale=0.2]{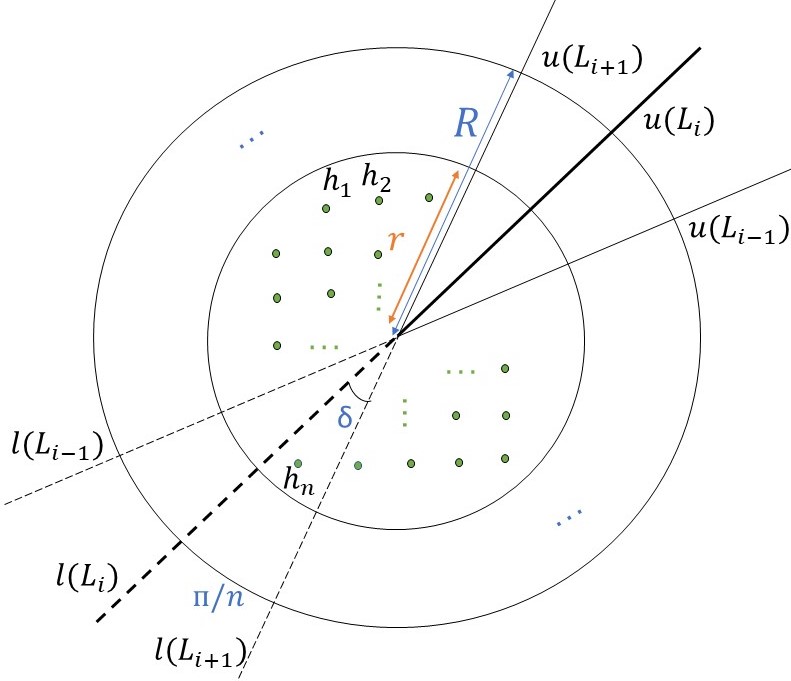}
\includegraphics[scale=0.2]{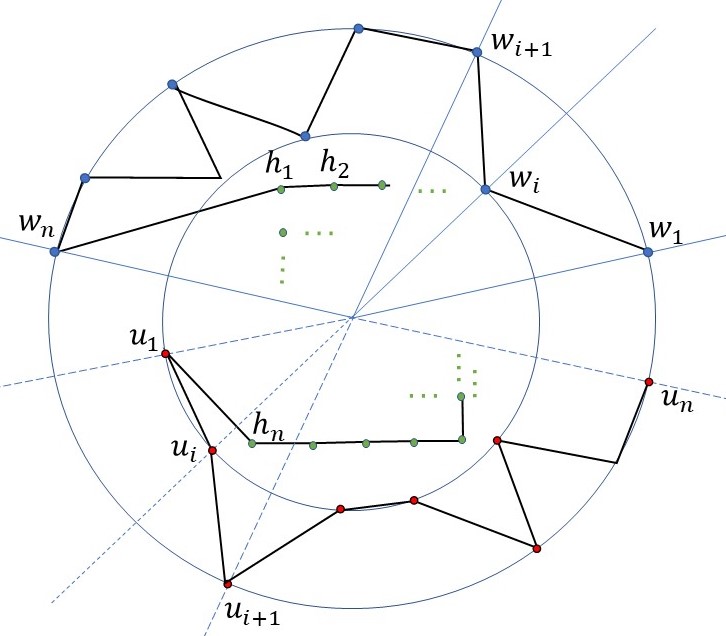}
\caption{Reduction from Set Disjointness to Diameter \label{Fig:diameter1}}
\end{figure}
We next describe the edges.
\begin{itemize}
\item
For each $1\leq i \leq n$, $E$ contains  $\{w_i, w_{i+1}\}$ and $\{u_i, u_{i+1}\}$.
\item
 For each $1\leq i \leq n$, $E$ contains $\{h_i, h_{i+1}\}$. 
 \item
$E$ contains $\{w_n, h_1\}$ and $\{h_{n}, u_1\}$.
\end{itemize}

\begin{claim}
The diameter of \gab is {2\Radius} if and only if $A\cap B\neq \emptyset $.
\end{claim}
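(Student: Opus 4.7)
The plan is to show, by direct case analysis on pairwise Euclidean distances in \gab, that every pair of vertices has distance at most $2R$ and that the value $2R$ is attained if and only if there is some index $i$ with $a_i = b_i = 1$, i.e., $i \in A\cap B$. The two structural properties I will exploit are: first, $w_i$ and $u_i$ sit on the two rays of the common line $L_i$ through the origin and are therefore antipodal through the origin; second, the $n$ lines $L_1,\ldots,L_n$ have distinct, evenly spaced angles, so any two rays from distinct lines make an angle at the origin strictly bounded away from $\pi$.

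I would split the pairs of vertices into four types. (i) Any pair involving some $h \in H$: $h$ lies inside the disk of radius $r=n$ and every other point lies within radius $R$, so the distance is at most $R+r \ll 2R$. (ii) Pairs $(w_i,w_j)$ or $(u_i,u_j)$ with $i\neq j$: both endpoints lie on rays in a common half-plane, subtending an angle in $(0,\pi)$ at the origin, so by the law of cosines the distance is strictly less than $2R$. (iii) Pairs $(w_i,u_j)$ with $i\neq j$: the rays $u(L_i)$ and $l(L_j)$ differ from antipodal by a nonzero multiple of $\pi/n$, so their angular separation is at most $\pi - \pi/n$, and the law of cosines gives distance at most $2R\cos(\pi/(2n)) < 2R$. (iv) Pairs $(w_i,u_i)$: these are antipodal through the origin, so the distance equals $\|w_i\|+\|u_i\|$, which equals $2R$ precisely when $a_i=b_i=1$ (both placed at radius $R$) and is at most $R+r < 2R$ otherwise.

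Combining the four cases will yield the claim: if $A\cap B \neq \emptyset$, then case (iv) provides a pair at distance exactly $2R$ and cases (i)--(iii) rule out larger distances, so the diameter equals $2R$; if $A\cap B = \emptyset$, every pair falls strictly below $2R$, so the diameter is strictly less than $2R$.

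The main obstacle I foresee is verifying that the strict inequalities of cases (iii) and (iv) survive the rounding of $w_i, u_i$ to integer lattice corners. I plan to use the fact that the rounding rule is specifically chosen to push each $w_i$ and $u_i$ outward from the origin (into the corner of its cell on the far side of the origin), so in the ``if'' direction the antipodal distance in case (iv) is only inflated and therefore remains at least $2R$. In the ``only if'' direction the tightest case is (iii), where the unrounded slack is $2R - 2R\cos(\pi/(2n)) = \Theta(R/n^2) = \Theta(1)$ since $R = 2n^2$; I would finish with a short calculation showing that this geometric gap (roughly $\pi^2/2$) comfortably exceeds the total lattice perturbation of at most $2\sqrt{2}$ across a pair, so the strict inequality, and hence the diameter characterization, is preserved.
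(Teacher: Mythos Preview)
Your proposal is correct and follows essentially the same approach as the paper: both reduce to showing that the longest non-antipodal pair in $U\cup W$ has length at most $2R\cos(\pi/(2n))$ by the law of cosines, then check that the Taylor-series gap $2R(1-\cos(\pi/(2n)))\approx \pi^2/2$ dominates the total rounding perturbation $2\sqrt{2}$, while the outward rounding rule guarantees an antipodal pair $(w_i,u_i)$ with $a_i=b_i=1$ has distance at least $2R$. Your four-way case split and your explicit handling of the ``if'' direction under rounding are more detailed than the paper's terse version, but the underlying argument is the same.
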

\begin{proof} Let $C_R, C_r$ be the circles centered at the center of the $[n^c] \times [n^c]$ grid at distance $R=2n^2$ and $r=n$, respectively, for a constant $c\geq 2+2/ \lg n$.

For any pair of points in $U \cup W$, the distance is less than $2R$ if and only if one of the following is true: (1) the two points are on two different lines; (2) the two points are on the same line, one at distance $r$ and the other distance $R$. Note that even if the points are rounded, this still holds. This is because, the length of the longest chord after diameter is not more than $\cos{\alpha} +2\sqrt{2}$, where $\alpha=\pi/2n$, as shown in Figure \ref{Fig:diamapproxhard}. We assumed $R=2n^2$, so $\cos{\alpha} +2\sqrt{2}<2R$ by the Taylor series.
If there is an $L_i$ such that both $w_i$ and $u_i$ are distance $R$ from the center, then their distance is at least $2R$. This occurs only when for some $i$,  $a_i=b_i=1$, so $A\cap B\neq \emptyset $.\end{proof}

\begin{theorem} 
\label{thm:diameterlowerbound}
Any randomized asynchronous distributed algorithm in the CONGEST model for solving Diameter in a graph of $n$ nodes requires $\Omega(n^2)$ expected bits of communication.
\end{theorem}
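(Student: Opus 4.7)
The plan is to reduce Path Computation of Set Disjointness on a path of length $\Theta(n)$ directly to Diameter on $G_{A,B}$, and then invoke the lower bound machinery already proved (Theorem \ref{t:2-to-path} together with Theorem \ref{t:setdisjoint}). Concretely, I would observe first that the communication network $G_{A,B}$ defined in the construction is topologically a single path $w_1{-}\cdots{-}w_n{-}h_1{-}\cdots{-}h_n{-}u_1{-}\cdots{-}u_n$ on $3n$ vertices. Designate $w_1$ as Alice and $u_n$ as Bob; then the positions of the $w_i$'s are completely determined by Alice's bit-vector $A$, the positions of the $u_i$'s are completely determined by Bob's bit-vector $B$, and the positions of the $h_i$'s are fixed a priori so both parties (and every interior node of the path) can embed them without extra communication. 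All constraints of the model are met: coordinates are integer and lie in $[n^c]\times[n^c]$ by the rounding rule, all points are distinct (the scales $r=n$ and $R=2n^2$ guarantee no two rays share a rounded grid cell), and the graph is connected.

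Next I would argue the reduction itself. The preceding claim states that the diameter of $G_{A,B}$ equals $2R$ iff $A\cap B\neq\emptyset$, so learning the diameter is equivalent to learning the Set Disjointness answer. Therefore any randomized asynchronous CONGEST protocol that correctly solves Diameter on $G_{A,B}$ yields a randomized protocol in which every vertex of the path learns $f(A,B)$, where $f$ is the Set Disjointness function on $n$-bit inputs. This is exactly a Path Computation of $f$ on a path of length $m=3n-1$, with message-size parameter $b=O(\log n)$.

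Finally I would apply Theorem \ref{t:2-to-path} with this $f$: by Theorem \ref{t:setdisjoint}, $C_2(f)=\Omega(n)$, so the message complexity of any such path protocol is at least $\Omega\!\bigl(\tfrac{m}{b}(C_2(f)-\lg m)\bigr)=\Omega(n^2/\log n)$. Multiplying by the $b=O(\log n)$ bits per message, the total bit complexity is $\Omega(n^2)$. Since the graph $G_{A,B}$ has $\Theta(n)$ vertices, this is an $\Omega(N^2)$ lower bound in terms of the network size $N$, establishing the theorem.

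The main obstacle, and the reason the construction was set up in such detail earlier, is ensuring that the embedding really does sit inside the geometric communication model used by Theorem \ref{t:2-to-path}: namely, that $G_{A,B}$ is a genuine path so the path-computation lower bound applies, that Alice and Bob's inputs only affect their own halves of the path, and that the rounding to integer grid points does not collapse distinct points or destroy the gap between the $r$-case and $R$-case needed by the claim. Once these model-compatibility points are checked, the reduction itself is essentially mechanical.
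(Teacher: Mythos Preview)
Your overall strategy matches the paper's, but there is a real gap in the step where you assert that the Diameter protocol on $G_{A,B}$ ``is exactly a Path Computation of $f$ on a path of length $m=3n-1$.'' In the Path Computation problem as defined (and crucially in the proof of Theorem~\ref{t:2-to-path}), the initial states of the intermediate nodes $v_1,\ldots,v_{m-1}$ must be independent of the inputs $x,y$: Lemma~\ref{l:one-input} needs that for \emph{every} edge $\{i,i+1\}$, Alice (knowing only $x$) can simulate nodes $0,\ldots,i$ while Bob (knowing only $y$) can simulate nodes $i+1,\ldots,m$. In your setup the interior node $w_i$ (for $i\ge 2$) has a position that depends on $a_i$, so Bob cannot simulate it across a cut inside $W$; dually, Alice cannot simulate any $u_j$. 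Hence Theorem~\ref{t:2-to-path} does not apply to the full $(3n{-}1)$-edge path, and a Diameter protocol on $G_{A,B}$ is not literally a Path Computation protocol in the required sense.

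The paper handles this by taking the Path Computation instance to have length $N=n/3$ (essentially the $H$ segment) and having Alice locally simulate all of $W$ at her endpoint while Bob locally simulates all of $U$ at his. Because the $h_i$ positions are fixed and the $W$ (resp.\ $U$) part is determined entirely by $A$ (resp.\ $B$), the simulation hypotheses of Lemma~\ref{l:one-input} now hold at every edge of this shorter path, and Theorem~\ref{t:2-to-path} yields $\Omega(N^2)=\Omega(n^2)$ bits. Your argument becomes correct with this adjustment; the asymptotic bound is unchanged, but the path you feed into Theorem~\ref{t:2-to-path} must be the $H$ portion with $W$ and $U$ collapsed into the endpoints, not the entire $G_{A,B}$.
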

 \begin{proof}
 Suppose  Alice and Bob are two endpoints of a path with $N=n/3$ edges numbered $0,1,...,N$, where Alice and Bob each know subsets  $A$ and $B$ resp., of $\{1,...,N\}$, and they want to compute the Set Disjointness problem on $(a,b)$, where $a=(a_1,...,a_N)$ and $b=(b_1,...,b_N)$ are the characteristic vectors of $A$ and $B$.  
  Alice can construct the subgraph of $G$ induced by $H \cup W$ and given $B$, Bob can construct the subgraph induced by $H \cup U$.  They can then each carry out a simulation of the protocol which determines the diameter  of $G$ and output 1 or 0 depending on whether the diameter is at least {2\Radius} or not.  Thus the communication cost is no less than the cost of computing disjoint sets over a path of length $N$ where $n=3N$ is the total number of nodes in $G$. By Theorem \ref{t:2-to-path},  $\Omega(n^2/\lg n)$ messages are required.\end{proof}

\begin{theorem} 
Any randomized asynchronous distributed algorithm for approximating Diameter in a graph of $n$ nodes within a $1-\epsilon$ factor of optimal requires $\Omega( min\{n^2,1/\epsilon\})$ expected bits of communications.
\end{theorem}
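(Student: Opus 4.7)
The plan is to adapt the reduction of Theorem \ref{thm:diameterlowerbound}, tuning the number of lines in $G_{A,B}$ so that YES and NO instances of Set Disjointness remain distinguishable under a $(1-\epsilon)$-approximate diameter. Concretely, I will reduce $N$-bit Set Disjointness, with $N=\min\{\lfloor n/3\rfloor,\lfloor c_0/\sqrt{\epsilon}\rfloor\}$ for a small absolute constant $c_0$, to $(1-\epsilon)$-approximate Diameter on an $n$-vertex graph, and then invoke Theorems \ref{t:setdisjoint} and \ref{t:2-to-path} to obtain $\Omega(N^2)=\Omega(\min\{n^2,1/\epsilon\})$ expected bits.

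I would reuse the construction of Section \ref{sec:lowerbounddiam} verbatim but with only $N$ evenly-spaced lines through the origin (so $|W\cup U\cup H|=3N$), and place the remaining $n-3N$ nodes as distinct lattice points strictly inside the inner disk $C_r$ of radius $r=n$ (which contains $\pi n^2\gg n$ lattice points), attached as leaves to $H$. These dummies cannot influence the diameter and can be simulated by either Alice or Bob in the reduction, so the embedded path $w_1\text{-}\cdots\text{-}w_N\text{-}h_1\text{-}\cdots\text{-}h_N\text{-}u_1\text{-}\cdots\text{-}u_N$ of length $3N$ is intact. I would also take $R=n^{c}$ for a sufficiently large constant $c$; this keeps all coordinates polynomial (hence IDs $O(\log n)$) while making the $O(1)$ lattice-rounding error negligible compared to the YES/NO gap.

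The geometric heart of the argument: a point at distance $R$ on the upper ray of $L_i$ and a point at distance $R$ on the lower ray of $L_j$ are at Euclidean distance $2R\lvert\cos((\alpha_i-\alpha_j)/2)\rvert$, where $\alpha_i$ is the angle of $L_i$ with the $x$-axis. In a YES instance, some line carries both $w_i$ and $u_i$ at distance $R$, giving an ideal diameter of $2R$. In a NO instance, no single line does, so the worst pair lies on two distinct (in fact, adjacent) lines and has ideal distance at most $2R\cos(\pi/(2N))$; all other pairs involve a point in $C_r$ and contribute at most $R+r\ll 2R$. Using the Taylor estimate,
\[
2R\bigl(1-\cos(\pi/(2N))\bigr) \;\ge\; \frac{\pi^2 R}{4N^2} - O\!\left(\frac{R}{N^4}\right),
\]
so after subtracting the $O(1)$ rounding perturbation and normalizing by $2R$, the relative YES/NO gap is at least $\pi^2/(8N^2)-O(1/N^4)-O(1/R)$. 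Choosing $c_0$ so that this quantity strictly exceeds $\epsilon$ whenever $N\le c_0/\sqrt{\epsilon}$ (any $c_0<\pi/\sqrt{8}$ works once $R$ is large enough) guarantees that any $(1-\epsilon)$-approximate diameter strictly separates YES from NO.

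The reduction then runs exactly as in Theorem \ref{thm:diameterlowerbound}: Alice simulates $W$ using $A$, Bob simulates $U$ using $B$, they split $H$ and the dummies arbitrarily, and they exchange only the messages that cross their cut during the simulation of the approximate-Diameter protocol on $G_{A,B}$. Any such protocol thereby yields a Path Computation protocol for $N$-bit Set Disjointness along the length-$3N$ path inside $G_{A,B}$, and Theorems \ref{t:setdisjoint} and \ref{t:2-to-path} give the lower bound $\Omega(3N\cdot N)=\Omega(N^2)$ expected bits. Since $N^2=\Theta(\min\{n^2,1/\epsilon\})$, the claim follows. The main obstacle I anticipate is the delicate choice of $R$ relative to $N$: the additive $\Theta(R/N^2)$ separation must strictly dominate the $O(1)$ rounding perturbation uniformly over the entire range of $N$ of interest, a condition that is awkward precisely when $N$ is small (i.e.\ $\epsilon$ is close to constant). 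Making $R$ a sufficiently large polynomial in $n$ resolves this cleanly at the cost of a slightly larger coordinate grid, which is harmless since the model already allows $n^c$-bounded coordinates.
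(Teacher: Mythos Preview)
Your proposal is correct and follows essentially the same approach as the paper: both tune the number of lines to $N=\Theta(\min\{n,1/\sqrt{\epsilon}\})$ so that the relative gap $1-\cos(\pi/2N)\approx \pi^2/(8N^2)$ between the YES and NO diameters exceeds $\epsilon$, then invoke the Set Disjointness/Path Computation lower bound to obtain $\Omega(N^2)=\Omega(\min\{n^2,1/\epsilon\})$ bits. Your version is slightly cleaner in that it handles both regimes with a single parameter $N$, explicitly pads with dummy nodes inside $C_r$ to keep the graph at $n$ vertices, and enlarges $R$ to make the lattice-rounding term uniformly negligible; the paper instead splits into two cases and uses $R=2n^2$ throughout.
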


\begin{proof}
Consider the reduction from Set Disjointness to Diameter described before with vertices $V=W\cup U \cup H$ as shown in Figure \ref{Fig:diameter1}. Let $C_R, C_r$ be the circles centered at the center of the $[n^c] \times [n^c]$ grid at distance $R=2n^2$ and $r=n$, respectively, for a constant $c\geq 2+2/\lg n$. The maximum distance between any pair of points is the diameter of the largest circle, which is at least $2R$. The next greatest distance in this structure is $d \approx 2R \cos{\alpha}$, as this is the longest chord after diameter on the largest circle and it is greater than the diameter of the smaller circle, which is $r=n$, Figure \ref{Fig:diamapproxhard}. The reason that we say $d \approx 2R \cos{\alpha}$ is because we are rounding the points to grid points. As no more than one point is in one cell and the points are rounded to the outside of the circle,  the rounding can only increase the distance of the points from the center of the circle. This means the ratio of $d/2R$ after the rounding is not more than $\frac{2R \cos{\alpha} +2\sqrt{2}}{2R}=\cos{\alpha}+\frac{2\sqrt{2}}{2R}$. As we assumed $R=2n^2$, the ratio of $d/2R$ after the rounding is not more than $\cos{\pi/2n}+\sqrt{2}/n^2$.

Therefore, if there is a $(1-\epsilon)$-approximation algorithm for the Diameter problem, where $1- \epsilon > \cos{\pi/2n}+\sqrt{2}/n^2$, then the algorithm also finds the exact solution of Diameter and solves the problem of Set Disjointness of size $n/3$. So, if $\epsilon <1-\cos \pi/2n-\sqrt{2}/^2n$, then by Theorem \ref{thm:diameterlowerbound} any randomized asynchronous distributed algorithm for approximating Diameter of $n$ points within $1-\epsilon$ factor of optimal requires $\Omega(n^2)$ bits of communications.

If $\epsilon \geq 1-\cos(\pi/2n)-\sqrt{2}/n^2$, we build the same structure but with $\theta(1/\sqrt {\epsilon})$ points, where $R=2n^2$, and reduce a Set Disjointness problem with size $n'=1/2\sqrt{\epsilon}$ to this problem. 
By using Taylor series we know that $1-\cos(x)\geq x^2/2!-x^4/4! \geq 11x^2/24$, if we pick $n'=1/2 \sqrt{\epsilon}$, then $1-\epsilon > \cos\pi/2n'+\sqrt{2}/n^2$ as $1-\cos\pi/2n'\geq 11(\pi/2n')^2/24\geq  4\epsilon$ and $\epsilon-\sqrt{2}/n^2>0$. Thus, if $\epsilon >1-\cos \pi/2n-\sqrt{2}/n^2$, then by Theorem \ref{thm:diameterlowerbound} any randomized asynchronous distributed algorithm for approximating Diameter of $n$ points within $1-\epsilon$ factor of optimal requires $\Omega(1/{\epsilon})$ bits of communications.\end{proof}

\begin{figure}
\centering
\includegraphics[scale=0.22]{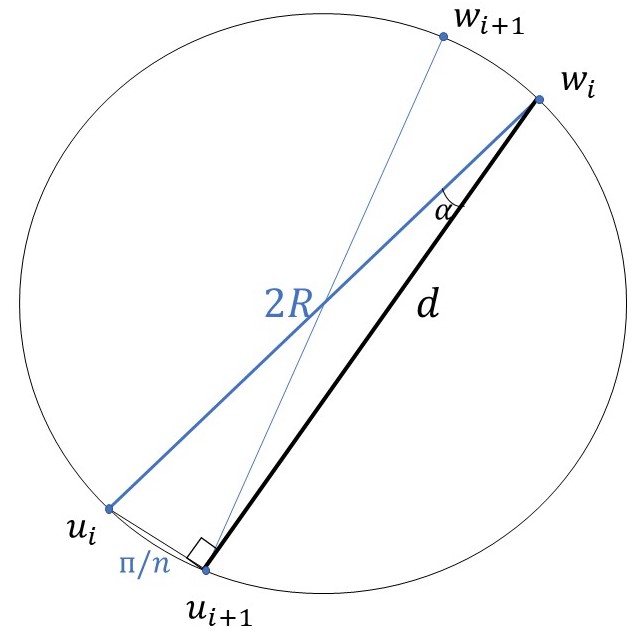}
\caption{The distance between $w_i$ and $u_{i+1}$ is the second largest distance of a pair of points after the diameter. 
\label{Fig:diamapproxhard}}
\end{figure}

\subsection{Lower Bound for Convex Hull}

We reduce the problem of Path Set Disjointness to Convex Hull. We show that if Convex Hull can be solved in $o(n^2)$ communication in an asynchronous network with $n$ nodes, then this implies an algorithm for Path Set Disjointness on a path of length $n/3$ with fewer than $\Omega(n^2)$ expected bits of communication, giving a contradiction.

Given a path of $m=n$ edges with endpoints Alice and Bob, where Alice and Bob each know subsets  $A$ and $B$ resp., of $\{1,...,n\}$, the communication network must compute the Set Disjointness problem on $(a,b)$, where $a=(a_1,...,a_n)$ and $b=(b_1,...,b_n)$ are the characteristic vectors of $A$ and $B$.  Consider the following   graph \gab$(V,E)$ where $|V|=3n$. 
Let $V= W\cup U\cup H$, $W=\set{w_1,w_2,...,w_n}$, $U=\set{u_1,u_2,...,u_n}$, $H=\set{h_1,h_2,...,h_{n}}$.

We consider $4n$ positions equally distanced on a circle centered at the center of the $[n^c] \times [n^c]$ grid, for a constant $c>1+1/(2\lg n)$, with radius $R=2n$: $P=\{p^i_j|1\leq i \leq n, 1\leq j \leq 4$, as shown in Figure \ref{Fig:convex}. In the case that any coordinate is not an integer coordinate, we round the point to one of grid points of the cell that contains this point as mentioned before in Section \ref{sec:lowerbounddiam}. The points of vertices $V$ for the the Convex Hull problem are determined as follows (Figure \ref{Fig:convex}):

\begin{itemize}
\item For each $1\leq i \leq n$, if $a_i=0$, $w_i$ is on the location $P^i_0$; otherwise on the location $P^i_1$; 
\item For each $1 \leq i \leq n$, if $b_i=0$, $u_i$ is on the location $P^i_2$; otherwise on the location $P^i_3$; 
\item The $h_i$'s, where $1\leq i \leq n$, are located on the grid points inside the circle centered at the center of the grid with radius $r=n$. 

\end{itemize}

Note that a circle centered at the origin with radius $r$ contains at least $\pi r^2$ lattice points \cite{phdLevin}. So choosing $R>n$ guarantees that there are more than $n$ grid points inside the circle to locate points of $H$. We choose $R=2n$ to make sure no two points of $W$ and $U$ are in the same grid cell and would not round to the same grid point.

\begin{figure}
\centering
\includegraphics[scale=0.24]{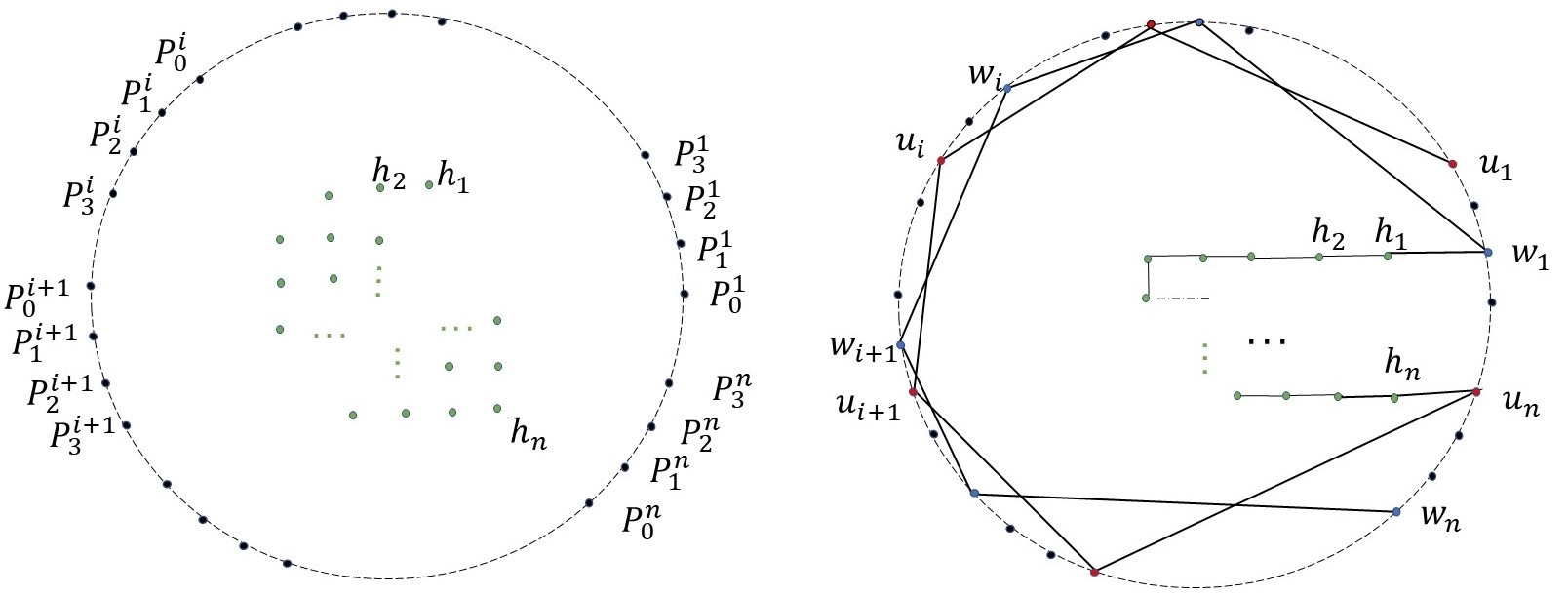}
\caption{Reduction from Set Disjointness to Convex Hull \label{Fig:convex}}
\end{figure}

We next describe the edges.
\begin{itemize}
\item
For each $1\leq i \leq n$, $E$ contains  $\{w_i, w_{i+1}\}$ and $\{u_i, u_{i+1}\}$.
\item
 For each $1\leq i \leq n$, $E$ contains $\{h_i, h_{i+1}\}$. 
 \item
$E$ contains $\{w_1, h_1\}$ and $\{h_{n}, u_n\}$.
\end{itemize}

\begin{claim}
\label{claim:ch-lb}
 $A\cap B\neq \emptyset $ if and only if there is some $i$ for which $w_i$ is on $P^i_1$ and the location of its neighbor on the convex hull of {\gab} is on position $P^i_3$ in the clockwise direction. Similarly, $A\cap B\neq \emptyset $ if and only if there is some $i$ for which $u_i$ is on $P^i_3$ and the location of its neighbor on the convex hull of {\gab} is on position $P^i_1$ in the counterclockwise direction.
\end{claim}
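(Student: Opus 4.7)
The plan is to reduce the claim to a local analysis of the four consecutive candidate positions $P^i_1,\ldots,P^i_4$ on the outer circle, after first showing that the convex hull of \gab's vertex set is the cyclic polygon on $W\cup U$ alone.

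First I would show that the convex hull is determined entirely by $W\cup U$, with the $2n$ occupied positions ordered clockwise around the grid center exactly as the underlying $P^i_j$ are. Two observations suffice: every $h_i$ sits inside the disk of radius $r+\sqrt{2}\leq n+2$, whereas every $w_i, u_i$ sits within distance $\sqrt{2}$ of the outer circle of radius $R=2n$, so $H$ contributes no hull vertex; and the consistent radial rounding rule perturbs the polar angle of each $P^i_j$ by at most $\sqrt{2}/R=\sqrt{2}/(2n)$, so the angular gap between the rounded versions of two adjacent candidates shrinks by at most $\sqrt{2}/n$, which is still less than the angular spacing $\pi/(2n)$ since $2\sqrt{2}<\pi$. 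Hence the cyclic order of the rounded positions matches that of the $P^i_j$.

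Next I would analyse one block. By construction, within $\{P^i_1,P^i_2,P^i_3,P^i_4\}$ exactly one position is occupied by $w_i$ (namely $P^i_1$ iff $a_i=1$) and exactly one is occupied by $u_i$ (namely $P^i_2$ iff $b_i=0$, else $P^i_3$). Crucially, $P^i_2$ is the only candidate lying strictly between $P^i_1$ and $P^i_3$ in the clockwise direction, and no candidate from any other block $i'\neq i$ intervenes on this sub-arc. Hence, conditional on $w_i$ being at $P^i_1$, its clockwise hull neighbour is $P^i_3$ if and only if $P^i_2$ is unoccupied, which by the placement rule is equivalent to $b_i=1$.

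Both directions of the first biconditional then follow immediately: if $A\cap B\neq\emptyset$, take any $i$ with $a_i=b_i=1$, so that $w_i$ lies at $P^i_1$, $P^i_2$ is empty, and $u_i$ at $P^i_3$ is the clockwise hull neighbour of $w_i$; conversely, if some $w_i$ lies at $P^i_1$ with clockwise hull neighbour at $P^i_3$, the placement rules force $a_i=1$ and (via vacancy of $P^i_2$) $b_i=1$, so $i\in A\cap B$. The second biconditional, for $u_i$ at $P^i_3$ and its counterclockwise neighbour at $P^i_1$, is proved identically since $P^i_2$ is again the unique intermediate position between them. The step I expect to require the most care is the rounding argument in the first paragraph: verifying that consistent radial rounding cannot swap the angular order of two neighbouring candidates is what justifies treating the convex hull as the polygon on the $P^i_j$ positions and makes the clean ``intermediate position'' analysis above valid.
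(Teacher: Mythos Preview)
Your proposal is correct and follows essentially the same line as the paper's own proof: establish that the convex hull consists exactly of the $2n$ points of $W\cup U$ in their cyclic order on the outer circle (so that for every $i$ the hull neighbour of $w_i$ on the relevant side is $u_i$), and then read off $a_i$ and $b_i$ from which of the four candidate positions in block $i$ are occupied. Your rounding analysis and explicit treatment of both directions of the biconditional are in fact more careful than the paper's terse argument, which simply asserts these facts.
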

\begin{proof} 
In the solution of Convex Hull for these points, all the $u_i$'s and $w_i$'s are on the convex hull and no $h_i$ is on the convex hull. For any $1 \leq i \leq n$, $u_i$ is the neighbor of $w_i$ on the convex hull of {\gab} in the counterclockwise direction. So, in the solution of Convex Hull problem, if there is some $i$ for which $w_i$ is on $P^i_1$ and the location of its neighbor on the convex hull of {\gab} is on position $P^i_3$ in the clockwise direction, this means $u_i$ is on $P^i_3$. So, both $a_i=1$ and $b_i=1$.
Similarly we can show that $A\cap B\neq \emptyset $ if and only if there is some $i$ for which $u_i$ is on $P^i_3$ and the location of its neighbor on the convex hull of {\gab} is on position $P^i_1$ in the counterclockwise direction.\end{proof}

\begin{theorem} 
Any randomized asynchronous distributed algorithm for solving the Convex Hull in a graph of $n$ nodes requires $\Omega(n^2)$ expected bits of communication.
\end{theorem}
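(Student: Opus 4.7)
The plan is to mirror the reduction in the proof of Theorem \ref{thm:diameterlowerbound}, but with the hull-based certificate from Claim \ref{claim:ch-lb} in place of the diameter certificate. Given a two-party Set Disjointness instance on $A, B \subseteq \{1,\ldots,N\}$ with $N = n/3$, both parties construct the graph $G_{A,B}$: Alice (holding $A$) knows the positions of $W \cup H$ because the $H$-coordinates are input-independent, while Bob (holding $B$) knows the positions of $U \cup H$. Note that $G_{A,B}$ is a simple path on $n=3N$ vertices whose ordering places the $W$-block, then the $H$-block, then the $U$-block, so the $N+1$ edges whose endpoints lie within the $H$-block or on one of the two bridges to $W/U$ are exactly the edges whose endpoints are simulatable by both parties.

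Given any Convex Hull protocol $\mathcal{A}$ with expected total communication $C(I)$ on input $I$, I turn it into a two-party protocol for Set Disjointness as follows. Alice picks a pivot $k$ uniformly at random from $\{N, N+1, \ldots, 2N\}$ and sends $k$ to Bob using $\lceil \log(N+1) \rceil$ bits. For this range of $k$, every vertex at path-index $\leq k$ lies in $W \cup H$ (Alice knows its position) and every vertex at path-index $\geq k+1$ lies in $H \cup U$ (Bob knows its position). They then locally simulate $\mathcal{A}$ on their respective halves, exchanging only those messages of $\mathcal{A}$ that cross the pivot edge $\{k, k+1\}$. Once $\mathcal{A}$ terminates, both parties check the certificate of Claim \ref{claim:ch-lb}: Alice scans each $w_i$ with $a_i=1$ and asks whether its clockwise hull-neighbor is positioned at $P^i_3$, while Bob performs the symmetric check on the $u_i$'s, combining verdicts with one extra bit.

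Letting $c_k(I)$ denote the expected communication of $\mathcal{A}$ across edge $\{k, k+1\}$, the expected inter-party cost is at most $\lceil \log(N+1) \rceil + \tfrac{1}{N+1}\sum_{k=N}^{2N} c_k(I) \leq \lceil \log(N+1) \rceil + C(I)/(N+1)$, since the $N+1$ candidate pivot edges together contribute at most the total $C(I)$. By Theorem \ref{t:setdisjoint}, this cost must be $\Omega(N)$ on some input $I^*$, so $C(I^*) \geq (N+1)\bigl(\Omega(N) - O(\log N)\bigr) = \Omega(N^2) = \Omega(n^2)$ expected bits.

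The main delicate point is the pivot restriction: outside the $H$-region one party would lack positions for vertices it must simulate, so the reduction would break. A secondary issue is that the stated Convex Hull output supplies only neighbor IDs, not positions, so Alice verifying ``$u_i$ is at $P^i_3$'' may need an auxiliary $O(N \log n)$-bit exchange (Bob sending the bits of $B$ for the relevant indices, or each hull vertex broadcasting its position alongside its ID); this overhead is well below the $\Omega(n^2)$ target and is absorbed into the bound. Otherwise the argument is structurally identical to the Diameter lower bound.
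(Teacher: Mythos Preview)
Your argument is essentially the paper's proof with Theorem~\ref{t:2-to-path} unpacked inline: the paper black-boxes the random-pivot-plus-averaging step by invoking that theorem on the $H$-segment of the path, while you carry it out explicitly by sampling $k$ uniformly from the $H$-region and averaging $c_k(I)$ over the $N+1$ candidate cut edges. In that sense the approaches coincide; your version is in fact a bit more careful than the paper in making explicit why the pivot must lie inside $H$ (so that each party can simulate its side without knowing the other's input).

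One genuine wrinkle in your write-up is the handling of the ID-versus-position issue. Your first proposed fix, ``Bob sending the bits of $B$ for the relevant indices,'' contributes $\Theta(N)$ bits to the \emph{two-party} protocol, not to the distributed one; since the two-party lower bound you are invoking is only $\Omega(N)$, this overhead swallows the bound entirely and leaves no constraint on $C(I)$. The clean resolution (which the paper uses implicitly when it lets $w_i$ ``know the location of its neighbours on the convex hull'') is simply to take the hard instance in the fixed-position model, i.e.\ set each node's ID equal to its coordinates. The adversary controls the instance, so this is free; the Convex Hull output then gives hull-neighbour \emph{positions} directly, Alice can evaluate Claim~\ref{claim:ch-lb} with no extra exchange, and a lower bound proved on fixed-position instances is a fortiori a lower bound for the general KT1 model. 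With that correction your proof goes through and matches the paper's.
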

 \begin{proof}
Suppose  Alice and Bob are two endpoints of a path with $N=n/3$ edges, where Alice and Bob each know subsets $A$ and $B$ resp., of $\{1,...,N\}$, and they want to compute the Set Disjointness problem on $(a,b)$, where $a=(a_1,...,a_N)$ and $b=(b_1,...,b_N)$ are the characteristic vectors of $A$ and $B$.  
  Alice can construct the subgraph of $G$ induced by $H \cup W$ and given $B$, Bob can construct the subgraph induced by $H \cup U$, and they both know the candidate positions $P$.  They can then each carry out a simulation of the protocol which computes the convex hull of vertices of $G$ and output 1 or 0 depending on the answer of Convex Hull of {\gab} by using Claim \ref{claim:ch-lb}. Since Alice knows the states of all $w_i$ and each $w_i$ knows the location of its neighbors on the convex hull, Alice outputs 1 if and only if there is some $i$ for which for  $w_i$ is on $P^i_1$ and the location of its neighbor on the convex hull of {\gab} is on position $P^i_3$ in the clockwise direction. Similarly Bob outputs 1 if and only if there is some $i$ for which $u_i$ is on $P^i_3$ and the location of its neighbor on the convex hull of {\gab} is on position $P^i_1$ in the counterclockwise direction. Thus the communication cost is no less than the cost of computing disjoint sets over a path of length $N$ where $n=3N$ is the total number of nodes in $G$. By Theorem \ref{t:2-to-path},  $\Omega(n^2/\lg n)$ messages are required.\end{proof}  


\section{Closest Pair}
We present an approximation algorithm that gives us a tradeoff between the approximation ratio and the amount of communication. Then, we show that any distributed algorithm for solving Closest Pair in a graph of $n$ nodes on an $[n^c] \times [n^c]$ grid requires $\Omega (n^2)$ expected bits of communication to approximate within a $\frac{n^{c-1/2}}{4}$ factor of the optimum, for a constant $c>1+1/(2\lg n)$.  

\subsection{Approximate Closest Pair}
 
\begin{theorem}
There exists an asynchronous algorithm in CONGEST KT1 which computes an $\frac{n^c}{\sqrt{\frac{n-1}{2}}}$- Approximation Closest Pair in a graph of $n$ nodes on an $[n^c] \times [n^c]$ grid , for a constant $c>1/2$, with $O(n \lg n)$ messages and time $O(diam(\st) \lg n)$ where $diam(\st)$ is the diameter of the spanning tree, plus the costs for building the $\st$. \end{theorem}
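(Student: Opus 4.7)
The approximation factor $n^c/\sqrt{(n-1)/2}$ suggests overlaying a virtual $k\times k$ grid of cells on the $[n^c]\times[n^c]$ domain with $k=\lfloor\sqrt{n-1}\rfloor$. Then $k^2\leq n-1<n$, so by pigeonhole at least one cell contains two of the $n$ points, and any two points in a common cell have Euclidean distance at most the cell diagonal $n^c\sqrt{2}/k\leq n^c/\sqrt{(n-1)/2}$. Since points lie on distinct integer grid positions the optimum closest-pair distance is at least $1$, so any algorithm that outputs two nodes sharing a cell is a valid $n^c/\sqrt{(n-1)/2}$-approximation. The algorithmic task therefore reduces to identifying such a dense cell and reporting two witnesses inside it, without paying the $\Omega(n\cdot diam(\st))$ cost of convergecasting all $n$ cell identifiers.

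My plan is to locate a dense cell by binary search on a rectangular sub-region of cells, preserving the invariant that the current rectangle $R$ contains strictly more points than cells. After constructing a spanning tree $\st$ with leader $L$ by the results of Section \ref{s:stconstruction}, I initialize $R$ to the full $k\times k$ grid. In each iteration $L$ broadcasts a midline splitting $R$ along its longer axis into halves $R_1,R_2$ of equal cell count; each node determines locally whether its position lies in $R_1$; a convergecast over $\st$ returns the count of points in $R_1$, from which $L$ infers the count in $R_2$. Since the two half-counts sum to more than $|R|$ and each half has $|R|/2$ cells, at least one half contains more points than cells, and $L$ selects such a half as the new $R$. After $\log_2(k^2)=O(\log n)$ iterations $R$ shrinks to a single cell $C$ known to contain at least two points. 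Then $L$ broadcasts $C$; each node in $C$ initiates a convergecast of its (ID, position) pair up the tree, with every internal node forwarding at most two such pairs; $L$ picks two and broadcasts the final answer.

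For the cost bound, a broadcast of $O(\log n)$ bits along $\st$ uses $n-1$ messages and $O(diam(\st))$ time, and a convergecast of a single integer counter (or of at most two (ID, position) pairs per edge) has the same cost. With $O(\log n)$ search iterations plus $O(1)$ concluding rounds the total is $O(n\log n)$ messages in $O(diam(\st)\log n)$ time beyond the cost of constructing $\st$, matching the claim. Invariant preservation is immediate from averaging: if $R$ contains $p>|R|$ points and is split into halves of $|R|/2$ cells holding $p_1,p_2$ points with $p_1+p_2=p$, then $\max(p_1,p_2)>|R|/2$ as required.

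The main conceptual obstacle is precisely what the binary search is designed to sidestep. A direct attempt to detect a duplicated cell by gathering cell identifiers at the leader would require forwarding up to $n$ distinct identifiers along $\st$, costing $\Omega(n\cdot diam(\st))$ messages on a path-like tree; by aggregating only one integer count per round we keep each iteration at $O(n)$ messages and obtain the claimed $O(n\log n)$ total. A minor care item is that midline coordinates of the form $jn^c/k$ may be non-integer, but $L$ can transmit numerator and denominator in $O(\log n)$ bits and each node compares against its integer coordinate locally. Apart from the construction of $\st$, the entire procedure operates in the CONGEST KT0 model, consistent with the framework laid out in Section \ref{s:stconstruction}.
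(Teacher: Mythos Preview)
Your proposal is correct and follows essentially the same approach as the paper: partition the domain into fewer than $n$ cells, invoke pigeonhole to guarantee a doubly-occupied cell, locate such a cell by $O(\log n)$ rounds of broadcast/convergecast binary search that maintain the invariant ``more points than cells in the current region,'' and finally pull up two witnesses. The only differences are cosmetic---the paper linearizes the cell indices and halves a one-dimensional index range where you halve a two-dimensional rectangle along its longer axis---except for one small glitch: with $k=\lfloor\sqrt{n-1}\rfloor$ you have $k\le\sqrt{n-1}$, so the cell diagonal $n^c\sqrt{2}/k$ is \emph{at least} (not at most) $n^c/\sqrt{(n-1)/2}$; to hit the stated ratio exactly you need the paper's asymmetric $\sqrt{n}\times\sqrt{n-1}$ split (whose $\sqrt{n(n-1)}<n$ cells have sides $n^c/\sqrt{n}$ and $n^c/\sqrt{n-1}$ and hence diagonal at most $\sqrt{2}\,n^c/\sqrt{n-1}$), or else absorb the discrepancy into a constant.
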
 

\begin{proof}

The main idea of algorithm \ref{Algclosestapprox} is to divide the $[n^c] \times [n^c]$ grid to a $\sqrt{n}$ by $\sqrt{n-1}$ grid, for a total of $\sqrt{n^2-n}$ grid cells: $C={c_1,c_2,..., c_{\sqrt{n^2-n}}}$. There are $n$ nodes and fewer than $n$ cells, so there is a cell with more than one node. The algorithm finds such a cell and nodes and broadcasts them. 

A spanning tree is first constructed. Next, we use binary search to find a cell with more than two points in $\lg({\sqrt{n^2-n}})$ rounds. Let $loc(p)$ denote the position of node $p$ and ${index(p)}$ denote the index of the cell in $C$ that contains $loc(p)$. Let $mid(s,e)=\lfloor \frac{s+e}{2}\rfloor$ for any $s$ and $e$.

The leader broadcasts $<start><s_i,e_i>$ in round $i$ for an interval $I_i=[s_i,e_i]$ of the indices, and the goal is to find the number of nodes which are in cells whose indices are in the first half of the range $I_i$. Starting with the leaves of $\st$, each node $p$ computes $x_p$, the number of nodes in the subtree rooted at $p$ which are in cells whose indices are in the first half of the interval $I_i$,$[s_i, mid(s_i,e_i)]$, and sends $<s_i,e_i,x_p>$ to its parent. When the leader receives messages for a specific $I_i$ from all of its children, it computes $x_L$ and picks $I_{i+1}$ to be the half of the interval that has more nodes than the number of cells, and broadcasts $I_{i+1}$. When the leader receives information about the last two cells, it picks the cell $ce$ that contains at least two points and broadcasts $<cell><index(ce)>$ to find two points in $ce$. When a node $p$ receives $<cell><index(ce)>$ and $p\in ce$, it sends $<decision><loc(p)$ to inform the leader about its location.

 As described before, there is always a cell with more than one point. The cells are $\frac{n^c}{\sqrt{n}}$ by $\frac{n^c}{\sqrt{n-1}}$. So we can easily observe that there is a pair of points with distance of at most the diameter of the cell, $\frac{\sqrt{2}n^c}{\sqrt{n-1}}$, but the actual closest pair could be at distance 1 from each other. Therefore, the approximation ratio of this algorithm is $\frac{n^c}{\sqrt{\frac{n-1}{2}}}$.

We observe that each node sends $O(\lg \sqrt{n^2-n})$ messages of size $O(\lg n)$ , so the total number of messages is $n \lg \sqrt{n^2-n}$ or $O(n \lg n)$ plus the cost of $ST$. \end{proof}

\begin{algorithm}
\caption{ $\frac{n^c}{\sqrt{\frac{n-1}{2}}}$-approximation Closest Pair,c>1/2 \label{Algclosestapprox}}

\begin{algorithmic}[1]
\Procedure{Closest Pair}{$P$ is a set of $n$ nodes}

\State Find spanning tree $\st$ with leader L.
\State $L$ broadcasts  $<start><1,{\sqrt{n^2-n}}>$ when $\st$ is complete.
\State When a leaf $p$ receives $<start><s,e>$,  if $index(p)\in [s,mid(s,e) ]$ then $p$ sends $<s,e,1>$ to its parent; otherwise sends $<s,e,0>$.
\State For specific $s$ and $e$, when a non-leaf node $p$, including  the leader, receives $<s,e,x_q>$ from all of its children $q$, $p$ sends $<s,e,\sum_{q}{x_q}+(1-x_p)$ to its parent, where $x_p=0$ if $index(p)\in [s,mid(s,e)]$ and $x_p=1$ otherwise.
 \State For specific $s$ and $e$, when $L$ receives $<e,s,x_q>$ from all of its children $q$ and computes $x_L$,
 
 	\textbf{If}	$e-s=1$, $L$ broadcasts $<cell> <index(ce)>$ to its children, where $ce$ is one of $c_e$ or $c_s$ that contains at least two points.
 	
	\textbf{Else If} $x_L> mid(s,e)-s+1$, $L$ broadcasts $<start>< s,mid(s,e)>$.
	
	\textbf{Else} $L$ broadcasts $<start><mid(s,e)+1,e>$.

	\State When a node $p\neq L$ receives $<cell><ce>$, if $index(p)=index(ce)$ and $p$ has not sent more than one $<decision>$ message to its parent before, then $p$ sends$<decision><loc(p)>$ to its parent.
	\State When a non-leaf node $p$ receives $<decision><loc(q)>$, from one of its children $q$ if $p$ has not send more than one $<decision>$ message to its parent before, then $p$ sends$<decision><loc(q)>$ to its parent.
	\State When a $L$ receives $<decision><loc(q)>$, from one of its children $q$ if $p$ has already received another $<decision>$ message or $index(L)=index(ce)$, then $L$ broadcasts the location of these two points in $ce$ through $ST$.

 \EndProcedure
\end{algorithmic}

 \end{algorithm}

\begin{coro}
There exists an asynchronous algorithm in CONGEST KT1 which computes an $\frac{n^c}{\sqrt{\frac{k}{2}}}$-Approximate Closest Pair in a graph of $n$ nodes on an $[n^c] \times [n^c]$ grid , for a constant $c>1/2$, in $O(n \lg k)$ messages in time $O(diam(\st) \lg k)$, where $diam(\st)$ is the diameter of the spanning tree, plus the cost of computing $\st$, for any $k\leq {n-1}$.
\end{coro}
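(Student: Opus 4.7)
The plan is to generalize Algorithm \ref{Algclosestapprox} in the obvious way: instead of partitioning the $[n^c]\times[n^c]$ grid into $\sqrt{n}\times\sqrt{n-1}$ cells, partition it into approximately $k$ cells, chosen so that (i) by pigeonhole some cell still contains at least two nodes and (ii) the cell diameter is $O(n^c/\sqrt{k})$. For instance, take a $\lceil\sqrt{k+1}\rceil \times \lfloor\sqrt{k}\rfloor$ grid of cells, giving at most $k+O(\sqrt k)\le n-1$ cells in total (for the regime $k\le n-1$), so at least one cell contains two or more of the $n$ nodes. Each cell then has side lengths $\Theta(n^c/\sqrt{k})$ and diameter at most $\sqrt{2}\,n^c/\sqrt{k}$. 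Since the true closest pair could be at distance $1$, reporting any pair inside a common cell yields an approximation ratio of $\sqrt{2}\,n^c/\sqrt{k} = n^c/\sqrt{k/2}$, as required.

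Next, I would reuse the binary-search routine from the proof of the previous theorem verbatim, but now over the range $[1,\sqrt{k(k+1)}]$ (or whatever the total number of cell indices turns out to be) instead of $[1,\sqrt{n^2-n}]$. The leader $L$ broadcasts the current interval $[s_i,e_i]$, each node computes how many nodes in its subtree lie in cells whose index falls in the first half of the interval, aggregates up the spanning tree, and $L$ recurses into whichever half still contains strictly more nodes than cells (which always exists by pigeonhole, using the same invariant as before). Once the interval narrows to a single cell $ce$ with at least two nodes, $L$ broadcasts $\langle cell\rangle\langle index(ce)\rangle$ and the first two nodes in $ce$ report their locations back.

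For the complexity: the number of binary-search rounds is $\lceil\lg \sqrt{k(k+1)}\rceil = O(\lg k)$. Each round costs $O(1)$ messages of $O(\log n)$ bits per node (once up the tree, once down for the broadcast), for a total of $O(n\lg k)$ messages, and the time per round is $O(diam(\st))$, for total time $O(diam(\st)\lg k)$, on top of the spanning-tree construction cost. Correctness of the final reported pair, and of the binary-search invariant, is identical to that of Algorithm \ref{Algclosestapprox}.

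I do not expect any real obstacle here: the only mild subtlety is choosing the cell dimensions so that the cell count is at most $n-1$ (needed for pigeonhole) while the diameter is at most $\sqrt{2}\,n^c/\sqrt{k}$ (needed for the approximation ratio). The $\lceil\sqrt{k+1}\rceil \times \lfloor\sqrt{k}\rfloor$ choice above accomplishes both for $k\le n-1$, and rounding effects contribute only constant factors that are already absorbed into the claimed ratio $n^c/\sqrt{k/2}$.
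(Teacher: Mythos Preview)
Your proposal is correct and takes essentially the same approach as the paper, which simply partitions into a $\sqrt{k}\times\sqrt{k}$ grid of cells (without your attention to integrality) and reruns the binary search of Algorithm~\ref{Algclosestapprox} over $O(\lg k)$ rounds. One small wrinkle: your inequality $k+O(\sqrt{k})\le n-1$ is not guaranteed when $k=n-1$, but taking $\lfloor\sqrt{k}\rfloor\times\lfloor\sqrt{k}\rfloor\le k\le n-1$ cells fixes this immediately while still giving cell diameter $O(n^c/\sqrt{k})$.
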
 
\begin{proof}
If in Algorithm \ref{Algclosestapprox} we divide the grid to $\sqrt{k}$ by $\sqrt{k}$ cells evenly, each size of $\frac{n^c}{\sqrt{k}}$, the diameter of each cell is $\frac{n^c}{\sqrt{\frac{k}{2
}}}$, and so is the approximation ratio of the algorithm. Each node $p$ sends $O(k)$ messages of size $\lg k$, and the algorithm uses $O(nk)$ messages.\end{proof}

\subsection{Lower Bound}
We show a reduction from Path Set Disjointness to Closest Pair. Similar ideas were used in the streaming model \cite{lowerbound}.

\noindent
Given an instance of  the Path Set Disjointness problem, in which there are two players Alice and Bob where Alice has subset $A$ with characteristic vector $a=(a_1,...,a_n)$ and Bob has subset $B$ with characteristic vector $b=(b_1,...,b_n)$, we consider the following graph \gab$(V,E)$ where $|V|=3n$. 
Let $V= W\cup U\cup H$, $W=\set{w_1,w_2,...,w_n}$, $U=\set{u_1,u_2,...,u_n}$, $H=\set{h_1,h_2,...,h_{n}}$.  
The locations of the vertices of {\gab} are as follows:
\begin{itemize}
\item For each $1\leq i \leq n$, if $a_i=0$, $w_i$ is on coordinate $(2i,0)$; otherwise  on coordinate $(2i,1)$.
\item For each $1\leq i \leq n$, if $b_i=0$, $u_i$ is on coordinate $(2i,3)$; otherwise on coordinate $(2i,2)$.
\item  For each $1\leq i \leq n/2$, $h_i$ is located on coordinate $(2(i+n), 0)$, and for each $n/2+1\leq i \leq n$, $h_i$ is on coordinate $(2(i+n),3)$ .
\end{itemize}

We next describe the edges.
\begin{itemize}
\item
For each $1\leq i \leq n$, $E$ contains  $\{w_i, w_{i+1}\}$ and $\{u_i, u_{i+1}\}$.
\item
 For each $1\leq i \leq n$, $E$ contains $\{h_i, h_{i+1}\}$. 
 \item
$E$ contains $\{w_{n}, h_1\}$ and $\{h_{n}, u_1\}$.
\end{itemize}

\begin{claim}
The distance of the two closest vertices in the problem of Closest Pair on {\gab} is 1 if and only if $A\cap B\neq \emptyset $.
\end{claim}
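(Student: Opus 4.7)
The plan is to prove both directions by an exhaustive case analysis of distances between pairs of vertices, using the fact that all $x$-coordinates are even integers and the $y$-coordinates live in $\{0,1,2,3\}$. I will show that the only candidate pair that can achieve distance $1$ is $(w_i,u_i)$ for some $i$, and that such a pair realizes distance $1$ precisely when $a_i=b_i=1$, i.e., $i\in A\cap B$.

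First I would organize the vertex set by $x$-coordinate. The vertices $w_i$ and $u_i$ both lie on the vertical line $x=2i$ for $i=1,\dots,n$, whereas $h_i$ lies on the line $x=2(i+n)$. In particular, any two distinct vertical lines in the construction are at horizontal distance at least $2$, so any two vertices that do not share an $x$-coordinate are at Euclidean distance at least $2$. This eliminates every pair $(w_i,w_j)$, $(u_i,u_j)$, $(h_i,h_j)$ with $i\ne j$, every pair $(w_i,u_j)$ with $i\ne j$, and every pair involving an $h$-vertex and a non-$h$-vertex (since the $h$-vertices live on $x\ge 2n+2$ while $w$- and $u$-vertices live on $x\le 2n$). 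Thus the minimum distance can be $1$ only if it is attained by a pair $(w_i,u_i)$ for some common index $i$.

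Next I would do the direct coordinate computation for $(w_i,u_i)$. Since $w_i\in\{(2i,0),(2i,1)\}$ and $u_i\in\{(2i,2),(2i,3)\}$, the four cases indexed by $(a_i,b_i)\in\{0,1\}^2$ give vertical distances $3$, $2$, $2$, and $1$ respectively, with $1$ occurring exactly in the case $(a_i,b_i)=(1,1)$ (then $w_i=(2i,1)$ and $u_i=(2i,2)$). Combining this with the previous paragraph, the minimum over all pairs is $1$ iff some index $i$ satisfies $a_i=b_i=1$, which is exactly the statement $A\cap B\ne\emptyset$. There is essentially no obstacle here beyond bookkeeping; the one subtlety worth double-checking is the boundary between the $w/u$-block and the $h$-block, namely that $w_n$ at $x=2n$ and $h_1$ at $x=2n+2$ are horizontally separated by $2$ and hence at Euclidean distance at least $2$, which confirms that the $h$-vertices do not create any spurious distance-$1$ pair.
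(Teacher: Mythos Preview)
Your proof is correct and follows essentially the same approach as the paper's own argument: a case analysis showing that every pair except possibly $(w_i,u_i)$ is at distance at least $2$, and then checking the four $(a_i,b_i)$ cases for the pair $(w_i,u_i)$. Your version is in fact more explicit than the paper's, which simply asserts that pairs within $W$, within $U$, within $H\cup W$, and within $H\cup U$ are at distance at least $2$; your organization by $x$-coordinate makes this transparent and also cleanly handles the $(w_i,u_j)$ case with $i\neq j$ and the $H$-versus-$W\cup U$ boundary in one stroke.
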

\begin{proof}
It is clear that the distance between any two points in $W$ or any two points in $U$ is at least 2. The distance between any pair of points in $H\cup W$ and any pair of points in $H\cup U$ is also at least 2.
The distance between the points of $W$ and the points of $U$ is 1 if and only if there is an $i$ such that $w_i$ is on $(2i,1)$ and $u_i$ is on $(2i,2)$. This occurs only when $a_i=1$ and $b_i=1$, so $A\cap B\neq \emptyset$.\end{proof}

\begin{figure}
\centering
\includegraphics[scale=0.24]{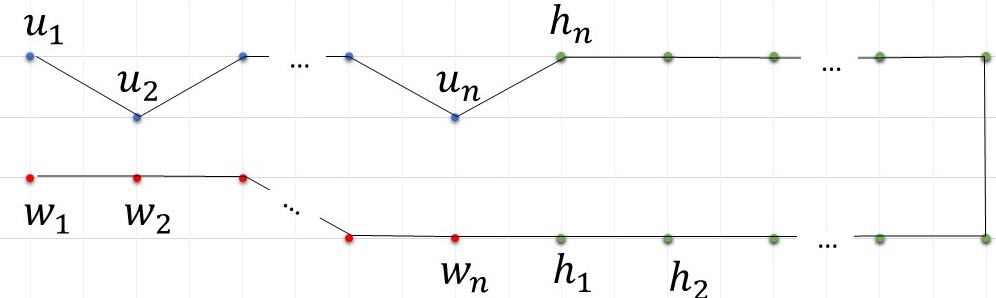}
\caption{Reduction from Set Disjointness to Closet Pair.  \label{Fig:cplb}}
\end{figure}
\begin{theorem}
Any randomized asynchronous distributed algorithm for solving Closest Pair in a graph with $n$ nodes on an $[n^c] \times [n^c]$ grid, for a constant $c>1+1/(2\lg n)$,  requires an expected $\Omega(n^2)$ bits of communication.
\end{theorem}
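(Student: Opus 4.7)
The plan is to follow the template of the Diameter lower bound (Theorem \ref{thm:diameterlowerbound}), using the Claim just established in place of its Diameter analogue: reduce Path Computation of Set Disjointness on a path of length $N = n/3$ to Closest Pair on the $n$-vertex graph $G_{A,B}$ just constructed, then invoke Theorems \ref{t:setdisjoint} and \ref{t:2-to-path}.

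Concretely, I would begin by supposing that Alice and Bob are the two endpoints of a path with $N = n/3$ edges, holding respective inputs $A, B \subseteq \{1, \ldots, N\}$ and trying to carry out Path Computation of Set Disjointness. From $A$ together with the fixed positions of the $h_i$'s, Alice can locally construct the induced subgraph of $G_{A,B}$ on $W \cup H$; symmetrically, Bob constructs the subgraph on $U \cup H$ from $B$. The two players then jointly simulate any purported Closest Pair protocol $\mathcal{A}$ on $G_{A,B}$: each locally runs the simulation for the nodes on her or his own side, and any message of $\mathcal{A}$ that crosses between Alice's and Bob's halves is carried between them along the length-$N$ path. Once $\mathcal{A}$ terminates, every node of $G_{A,B}$ --- including Alice and Bob --- knows the closest pair, and by the Claim above the two closest points are at distance $1$ exactly when $A \cap B \neq \emptyset$, so both players output the answer to Set Disjointness.

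Therefore the communication cost of any correct Closest Pair protocol on the $n$-vertex graph $G_{A,B}$ is at least the cost of Path Computation of Set Disjointness on a path of length $N$. Combining Theorem \ref{t:setdisjoint}, which gives $C_2(\text{Set Disjointness}) = \Omega(N)$, with Theorem \ref{t:2-to-path} then yields a lower bound of $\Omega\bigl(N(N - \lg N)\bigr) = \Omega(N^2) = \Omega(n^2)$ expected bits, as stated.

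The main technical point, exactly as in the Diameter proof, is to justify that Alice and Bob's joint simulation of $\mathcal{A}$ uses no more communication than $\mathcal{A}$ itself would on $G_{A,B}$, so that the Path Computation lower bound transfers cleanly; this works because $G_{A,B}$ has, by design, the structural form ``a piece constructible from $A$, a piece constructible from $B$, joined through a small shared $H$ of fixed positions.'' The grid-size hypothesis $c > 1 + 1/(2 \lg n)$ is satisfied automatically, since the largest coordinate used in the construction is $O(n)$ and hence fits inside $[n^c]$ for any such constant $c$ once $n$ is sufficiently large.
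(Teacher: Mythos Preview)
Your proposal is correct and follows essentially the same approach as the paper's own proof: reduce Path Computation of Set Disjointness on a path of length $N=n/3$ to Closest Pair on the $3N$-vertex path $G_{A,B}$, with Alice simulating the $W\cup H$ side and Bob the $U\cup H$ side, then apply the Claim together with Theorems~\ref{t:setdisjoint} and~\ref{t:2-to-path} to obtain $\Omega(N^2)=\Omega(n^2)$ bits. The paper's proof is phrased almost identically (and likewise concludes $\Omega(n^2/\lg n)$ messages, equivalently $\Omega(n^2)$ bits).
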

 \begin{proof}
  Suppose  Alice and Bob are two endpoints of a path with $n$ edges numbered $0,1,...,n$ where Alice and Bob each know subsets  $A$ and $B$ resp., of $\{1,...,n\}$, and they want to compute the Set Disjointness problem on $(a,b)$, where $a=(a_1,...,a_n)$ and $b=(b_1,...,b_n)$ are the characteristic vectors of $A$ and $B$.  
  Alice can construct the subgraph of $G$ induced by $H \cup W$ and given $B$, Bob can construct the subgraph induced by $H \cup U$.  They can then each carry out a simulation of the protocol which determines the closest pair of $G$ and output  0  iff  the closest pair is distance 1 away from each other.  Thus the communication cost is no less than the cost of computing disjoint sets over a path of length $N$ where $n=3N$ is the total number of nodes in $G$. By Theorem \ref{t:2-to-path},  $\Omega(n^2/\lg n)$ messages are required.\end{proof}
 
\begin{theorem}
Any randomized asynchronous distributed algorithm for approximating Closest in a graph with $n$ nodes on an $[n^c] \times [n^c]$ grid, for a constant $c>1+1/(2\lg n)$, within $\frac{n^{c-1}}{{2} }-\epsilon$ factor of optimal requires an expected $\Omega(n^2)$ bits of communications.
 \end{theorem}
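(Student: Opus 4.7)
The plan is to strengthen the reduction from Path Set Disjointness to Closest Pair used in the preceding theorem so that the ratio between the ``yes'' (intersecting) and ``no'' (disjoint) cases is no longer a constant factor of $2$, but rather $\Omega(n^{c-1})$. Specifically, I will rescale the construction by a factor $L:=\lfloor n^{c-1}/2\rfloor$, which the hypothesis $c>1+1/(2\lg n)$ makes at least a positive integer while keeping $2Ln\leq n^c$.

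Concretely, given the two-party instance with characteristic vectors $a,b\in\{0,1\}^n$, I build $G_{A,B}$ on $3n$ vertices $W\cup U\cup H$ with positions
\begin{itemize}
\item $w_i=(2Li,0)$ if $a_i=0$, else $w_i=(2Li,L)$;
\item $u_i=(2Li,2L+1)$ if $b_i=0$, else $u_i=(2Li,L+1)$;
\item the $h_i$'s occupy grid points in the right half strip $[n^c/2, n^c]\times[0,2L+1]$, placed on two horizontal lines ($y=0$ and $y=2L+1$) spaced $L+1$ apart, and wired into a path as before, with endpoints connected to $w_n$ and $u_1$.
\end{itemize}
The edge set is exactly that of the previous Closest Pair reduction. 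A direct case analysis (same column vs.\ adjacent columns vs.\ $h$-involving pairs) then gives: in column $i$ the pair $(w_i,u_i)$ has distance $1$ iff $a_i=b_i=1$, distance $\geq L+1$ otherwise; any pair in different columns of $W\cup U$ has horizontal distance $\geq 2L$; and by choice of the $H$-region the $h$-$h$, $h$-$w$, and $h$-$u$ distances are all at least $L+1$. This establishes the key gap claim: the closest-pair distance in $G_{A,B}$ is exactly $1$ if $A\cap B\neq\emptyset$, and at least $L+1\geq n^{c-1}/2$ if $A\cap B=\emptyset$.

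Now suppose, for contradiction, that there is a randomized asynchronous algorithm $\mathcal{A}$ that outputs a pair whose distance is within a factor $\alpha:=n^{c-1}/2-\epsilon$ of optimal using $o(n^2)$ expected bits. In the ``yes'' case $\mathcal{A}$ returns a pair at distance $\leq \alpha\cdot 1 < L+1$, while in the ``no'' case it returns a pair at distance $\geq L+1$. Comparison with the threshold $L+1$ therefore lets Alice and Bob, after each simulates the half of the protocol induced by $H\cup W$ and $H\cup U$ respectively, agree on whether $A\cap B=\emptyset$. By Theorem~\ref{t:2-to-path} applied to the path of length $N=n/3$ embedded in $H$ with $f=$ Set Disjointness and $C_2(f)=\Omega(N)$ from Theorem~\ref{t:setdisjoint}, this simulation must exchange $\Omega(n^2/\lg n)$ bits, contradicting the assumed $o(n^2)$ bound.

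The main obstacle is the geometric bookkeeping: I have to verify that rounding to integer coordinates preserves the $L+1$ lower bound on $h$-involving distances (straightforward since all points are exact grid points with the spacing chosen above) and that the right half of the grid truly accommodates $n$ points of $H$ on two horizontal lines with horizontal spacing $\geq L+1$, which requires $n\cdot(L+1)\leq n^c/2$; the condition $c>1+1/(2\lg n)$ provides precisely the slack needed for this to hold together with $L\geq n^{c-1}/2-1$. Everything else is a scaled copy of the construction used for the exact lower bound.
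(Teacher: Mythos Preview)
Your approach is essentially the paper's: rescale the exact Closest Pair reduction by a factor $k\approx n^{c-1}/2$ so that the closest-pair distance is $1$ in the intersecting case and at least $k$ in the disjoint case, then observe that any $(k-\epsilon)$-approximation distinguishes the two and invoke Theorem~\ref{t:2-to-path}.

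There is, however, a parameter slip in your geometric bookkeeping. The paper uses horizontal spacing $k$, placing $w_i$ at $x=ki$ and $h_i$ at $x=k(i+n)$, so that $W\cup U$ occupies the strip $x\in[k,kn]\subset[0,n^c/2]$ and $H$ occupies $x\in(kn,2kn]\subset[n^c/2,n^c]$ with no overlap. Your choice of horizontal spacing $2L$ for $W\cup U$ pushes $w_n$ out to $x=2Ln\approx n^c$, which collides with your placement of $H$ in the right half $[n^c/2,n^c]$; with $w_i$'s at $y=0$ interleaved among $h_j$'s at $y=0$, the asserted $h$--$w$ separation of $L+1$ does not hold as stated (indeed your own check $n(L+1)\le n^c/2$ fails, since $n(L+1)\approx n^c/2+n$). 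This is only a bookkeeping error---halving your horizontal spacing to $L$, or simply adopting the paper's coordinates, repairs it---but as written the construction does not quite close.
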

 \begin{proof}
Consider the reduction shown in the previous section from Set Disjointness to Closest Pair with a slightly different location for the points: spreading points in a way that the minimum possible distance for a pair of points is 1 or $\frac{n^c}{{2}n}$ of each other in the same structure shown in Figure \ref{Fig:cplb}:
 if $k=\frac{n^{c-1}}{2}$
 \begin{itemize}
\item For each $1\leq i \leq n$, if $A_i=0$, $v_i$ is on coordinate $(ki,0)$; otherwise  on coordinate $(ki,k)$.
\item For each $1\leq i \leq n$, if $B_i=0$, $u_i$ is on coordinate $(ki,2k+1)$; otherwise on coordinate $(ki,k+1)$.
\item $h_i$ is located on coordinate $(k(i+n), 0)$ for $1\leq i \leq n/2$ and on  coordinate $(k(i+n), 2k+1)$ for $n/2+1\leq i \leq n$.
\end{itemize} 
If there as an approximation algorithm for Closest Pair with approximation ratio no more than  $\frac{n^c}{{2} n}-\epsilon$ and uses less $o(n^2)$ bits of communication, this means that the algorithm can distinguish between distance $\frac{n^c}{{2} n}-\epsilon$ and distance 1 and  answer Closest Pair exactly with using $o(n^2)$ bits of communication, which is a contradiction.\end{proof}

The following theorem shows a better hardness of approximation with the idea of spreading the points even more on the plane.
\begin{theorem}
Any randomized asynchronous distributed algorithm for approximating Closest Pair in a graph with $n$ nodes on an $[n^c] \times [n^c]$ grid, for a constant $c>1+1/(2\lg n)$, within  $\frac{n^{c-1/2}}{4}-\epsilon$ factor of optimal requires an expected $\Omega(n^2)$ bits of communications, where $c\geq1$ is a constant.

 \end{theorem}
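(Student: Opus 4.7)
The plan is to adapt the reduction used in the previous theorem, but rather than arranging the $n$ pairs $(w_i,u_i)$ along a single row (which caps the ``far'' distance at $\Theta(n^{c-1})$), spread them across a two-dimensional $\lceil\sqrt n\rceil\times\lceil\sqrt n\rceil$ array of cells. This lets the inter-pair spacing grow to $\Theta(n^{c-1/2})$ while still embedding $n$ Set Disjointness bits, which is what is needed to apply Theorem \ref{t:2-to-path} and obtain an $\Omega(n^2)$ bit lower bound.

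Concretely, choose $k$ to be the largest integer with $4k\lceil\sqrt n\rceil+2k+1\le n^c$; the hypothesis $c>1+1/(2\lg n)$ yields $n^{c-1/2}\ge \sqrt{2n}$, so $k=\tfrac{n^{c-1/2}}{4}(1-o(1))$. Write each $i\in\{1,\ldots,n\}$ as $i=(p-1)\lceil\sqrt n\rceil+q$ with $p,q\in\{1,\ldots,\lceil\sqrt n\rceil\}$ and let $(X_i,Y_i)=(4kp,4kq)$ be the cell-base associated with $i$. Place $w_i$ at $(X_i,Y_i)$ if $a_i=0$ and at $(X_i,Y_i+1)$ if $a_i=1$; place $u_i$ at $(X_i,Y_i+2k+1)$ if $b_i=0$ and at $(X_i,Y_i+k+1)$ if $b_i=1$. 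The auxiliary $h_i$'s occupy $n$ otherwise-unused lattice points inside the layout, and the communication edges on $W\cup H\cup U$ are exactly the path-structured edges used in the previous theorem---independent of geometric positions.

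A direct distance check then shows that any two points in distinct cells lie at Euclidean distance at least $4k-O(1)>k$ from each other, while within a single cell $d(w_i,u_i)\ge k$ unless $a_i=b_i=1$, in which case $d(w_i,u_i)=1$. Hence the closest pair in $G_{A,B}$ has distance exactly $1$ when $A\cap B\neq\emptyset$ and distance at least $k\ge \tfrac{n^{c-1/2}}{4}-\epsilon$ otherwise, so any approximation algorithm with ratio strictly less than this gap decides Set Disjointness exactly. Following the same Alice/Bob simulation used in the previous theorems---Alice simulates the subgraph induced by $H\cup W$ and Bob the one induced by $H\cup U$, each running the claimed protocol on its local copy---Theorem \ref{t:2-to-path} applied to the embedded path of length $n$ yields the stated $\Omega(n^2)$ bit lower bound.

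The main obstacle I expect is the integer-lattice bookkeeping: I must verify that after snapping every coordinate to $\mathbb{Z}^2$ the cross-cell distances still strictly exceed $k$, that the hypothesis $c>1+1/(2\lg n)$ really does allow the $4k\lceil\sqrt n\rceil$ layout to fit inside $[n^c]\times[n^c]$ together with the small additive slack needed for the in-cell offsets, and that the loss from replacing the idealised spacing $n^{c-1/2}/4$ by the attained integer $k$ is absorbed into the ``$-\epsilon$'' of the stated ratio. None of these steps is conceptually deep, but each must be checked carefully to land the precise constant $\tfrac{1}{4}$ in the approximation ratio.
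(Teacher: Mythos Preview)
Your approach is essentially the same as the paper's: both replace the one-dimensional layout of the previous theorem by a $\sqrt{n}\times\sqrt{n}$ two-dimensional arrangement so that the inter-pair spacing can be pushed up from $\Theta(n^{c-1})$ to $\Theta(n^{c-1/2})$ while still encoding $n$ Set Disjointness bits. The paper's write-up is a one-sentence sketch (``$\sqrt{n}$ groups of four horizontal lines, each group containing $3\sqrt{n}$ points''), so your explicit coordinates and the Alice/Bob simulation are more detailed than what the paper provides.

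There is, however, a bookkeeping slip in your in-cell placement: with $w_i$ at $(X_i,Y_i+1)$ when $a_i=1$ and $u_i$ at $(X_i,Y_i+k+1)$ when $b_i=1$, the distance in the $a_i=b_i=1$ case is $k$, not $1$. You presumably intend $w_i$ to sit at $(X_i,Y_i+k)$ when $a_i=1$ (mirroring the previous theorem's ``$(ki,k)$'' offset), which then yields the claimed dichotomy $d(w_i,u_i)=1$ versus $d(w_i,u_i)\ge k$. This is exactly the kind of integer-lattice detail you flagged as the main obstacle; once corrected, the argument goes through and matches the paper.
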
 
 \begin{proof}
When reducing Set Disjointness to Closest Pair, instead of locating all the $3n$ points on four horizontal lines, $y=0,1,2,3$, we locate the points on $\sqrt{n}$ groups of four horizontal lines each group containing $3\sqrt{n}$ points as shown in Figure \ref{Fig:CPimproved}. In this way the minimum possible distance between a pair of point is 1 or $\frac{n^c}{4 \sqrt{n}}$.
\end{proof} 

 \begin{figure}
\centering
\includegraphics[scale=0.25]{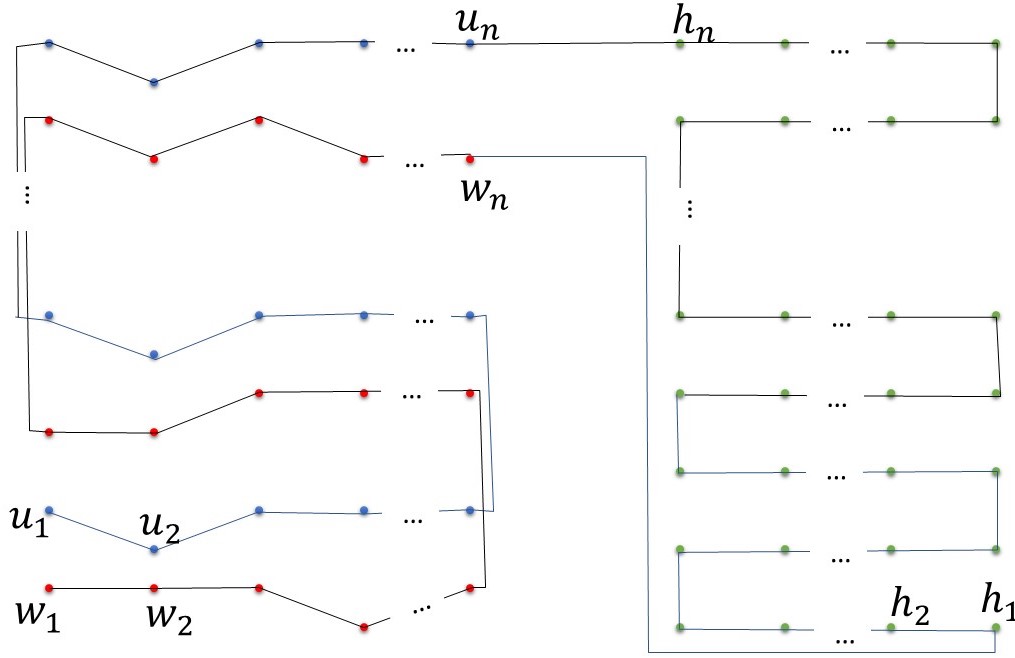}
\caption{Reduction from Set Disjointness to Closet Pair \label{Fig:CPimproved}}
\end{figure}

\section{open problems}

A bottleneck in our upper bounds is the computation of the spanning tree, which in the asynchronous model appears to require significantly more  communication (${\tilde{O}}(n^{3/2})$) than what is needed for the synchronous model ( $\tilde{O}(n)$ ). Is this gap necessary?

Finding an Euclidean minimum spanning tree (EMST) on a Geometric Network in the fixed point KT1 model, where edges of the spanning tree  are a subset of the edges of the geometric communication network and the weight of the edge between each pair of nodes is the Euclidean distance between the location of two nodes can be done with the algorithm found in \cite{KKT15}, since in this case each node knows the weights of the edges between itself and its neighbors.  But if a node's ID is not related to its position,  merely knowing the neighbors' IDs may be insufficient to solve this problem in $o(m)$ communication. 

\bibliographystyle{plain}
\bibliography{GeometricNetwork}

%
%

\end{document}